\newtheorem{theorem}{Theorem}
\newtheorem{algorithm}[theorem]{Algorithm}
\newtheorem{definition}[theorem]{Definition}
\newtheorem{claim}[theorem]{Claim}
\DeclareMathAlphabet{\altmathcal}{OMS}{cmsy}{m}{n}
\newcommand{\rss}{z}
\title{An Exact, Linear Time Barabási--Albert Algorithm}
\date{}
\author[1]{\large Giorgos Stamatelatos}
\author[1]{\large Pavlos S. Efraimidis}
\affil[1]{\normalsize Dept. of Electrical and Computer Engineering, Democritus University of Thrace, Kimmeria, Xanthi 67100, Greece}
\begin{document}

\maketitle

\begin{abstract}
This paper presents the development of a new class of algorithms that accurately implement the preferential attachment mechanism of the Barabási--Albert (BA) model to generate scale-free graphs. Contrary to existing approximate preferential attachment schemes, our methods are exact in terms of the proportionality of the vertex selection probabilities to their degree and run in linear time with respect to the order of the generated graph. Our algorithms utilize a series of precise, diverse, weighted and unweighted random sampling steps to engineer the desired properties of the graph generator. We analytically show that they obey the definition of the original BA model that generates scale-free graphs and discuss their higher-order properties. The proposed methods additionally include options to manipulate one dimension of control over the joint inclusion of groups of vertices.

\bigskip\noindent\small\textit{Keywords}: Preferential Attachment, Scale-Free Graphs, Barabási--Albert Model, Random Sampling
\end{abstract}

\section{Introduction}

% Motivation

The Barabási--Albert (BA) model~\cite{barabasi1999emergence} is a growing preferential attachment mechanism that dictates the rules of connections among vertices when newborn nodes enter the network. Specifically, it requires selecting $m$ vertices from the graph population when a newborn node enters the network in such a way that the probability of selecting a vertex is proportional to its degree. By repeating this process, the model results in the generation of a scale-free graph of order $n$. According to the analytical arguments made in the original paper, the random sampling model involved in this growing process is a selection without replacement and with inclusion probabilities strictly proportional to the degree of the vertices. Consequently, a round of the BA model is defined as follows:

\begin{definition}\label{def:ba}
(A round of the BA model.) Given a graph of order at least $m$, $m$ distinct nodes are randomly selected from the vertex population. The probability of each node to be present in the sample is exactly proportional to its degree. A new node is inserted into the graph and is connected to each of the $m$ selected nodes.
\end{definition}

Current state of the art preferential attachment models are typically efficient
in terms of their running time but they are not equivalent to the exact
model described by Barabási and Albert; they do not refer to the same simple graph without multiple edges, or the probability model employed is only an approximation of the original scheme.

% Related Work: the multigraph case

One of the most popular theoretical
models in the literature is the model of Bollobás~\cite{bollobas2001degree}, which employs a probability model that guarantees strict proportionality but results in a multigraph. Due to its simplicity and the rigorous analysis made in this work about various properties of this network, the model has since been adopted in the literature. Another example of a multigraph generator is mentioned in~\cite[Chapter 8]{van2016random}, where the edges are added with intermediate weight updates with replacement, a scheme that results in possible multiple edges per node pair.

% Related Work: the inclusion probability case

Other studies correctly treat the BA model as a simple graph but with a probability selection scheme that is only an approximation of the original model. For example, Hadian et al~\cite{hadian2016roll} define a simple graph generator but the probabilities of node inclusions are not exactly proportional to their degree due to rejections. This difference has been explained further in~\cite{stamatelatos2021weighted}, where the distinction of inclusion and selection probabilities is made. Berger et al~\cite{berger2014asymptotic} attempt to make a distinction about different probability schemes (independent, conditional, sequential) but also refer to the Bollobás multigraph model.

% Related Work: overview of the limitations

Despite the models mentioned previously being both efficient and rigorously studied, they do not strictly abide by the definition of the original BA model. The definition requires a sampling scheme without replacement that generates simple graphs, and the inclusion probability of a vertex is strictly proportional to its degree. These requirements can be summarized into the \textit{str$\pi$ps} random sampling scheme, which refers to a weighted random sampling design without replacement with inclusion probabilities strictly proportional to degree.
We give the definition of the str$\pi$ps scheme due to the term str$\pi$ps not being consistent throughout the literature:

\begin{definition}
(The str$\pi$ps random sampling scheme.) Assume a weighted random sampling scheme $S$ that randomly selects exactly $k$ discrete items from a population of $n$ items with weights $w_i$, $i=1,2,\ldots,n$. Then $S$ satisfies the str$\pi$ps property if the inclusion probability $\pi_i$ of element $i$ is exactly proportional to $w_i$ or is equal to $k \cdot (w_i / \sum_j w_j)$ for all $i$.
\end{definition}

% Our method

In this paper, we present a class of algorithms that obey the definition of the BA model strictly, both with respect to the type of the output graph and the interpretation of the probabilities being employed in the preferential attachment step being exactly proportional to the node degrees. Our algorithms also run in linear time with respect to $n$, or in constant time for each time step, i.e., for each new node. It is worth noting that it is trivial to apply any \textit{str$\pi$ps} sampling method on each time step independently but that would result in quadratic complexity for the whole process.
The computational complexity of the algorithm is of great importance since the sizes of the generated scale-free graphs are often very large, up to hundreds of thousands or even millions of nodes. To our knowledge, up to now there was no efficient sampling algorithm for running each step in $\altmathcal{O}(1)$ with respect to $n$. This could be a reason why current software libraries (e.g. networkx, igraph) have opted for a related efficient weighted random scheme that only approximates the BA definition. To the best of our knowledge this is the first time that both a strict interpretation of the inclusion probabilities and a linear running time are satisfied.

The starting point of our methods is the algorithm described in~\cite[Section 9.3]{dorogovtsev2002evolution}. During each time step, a random edge is selected and the newborn node is connected with the ends of that edge creating a triangle. The process is repeated until the graph is of the desired size. It is easy to show that on this sampling scheme, where each edge is guaranteed to connect two different nodes, each vertex exists in the edge set as many times as its degree and, therefore, its inclusion probability on any time step is exactly proportional to its degree, satisfying the \textit{str$\pi$ps} scheme. The same method was also mentioned previously in~\cite[Section IV]{batagelj2005efficient}:
\begin{quote}
    in a list of all edges created thus far, the number of occurrences of a vertex is equal to its degree, so that it can be used as a pool to sample from the degree-skewed distribution in constant time
\end{quote}
Here, we note that, regarding the joint inclusion of the vertices, the described model is only capable of sampling adjacent pairs of nodes.

This concept of random sampling where the sample space is computed and maintained ahead of time so that a random sample can be generated in constant time is known as \textit{whole sampling}~\cite[Section 1.7.1]{brewer2013sampling}. In this work, the approach of whole sampling is used to prepare a pool of sets of nodes which is then used on another sampling algorithm to generate the random sample of $m$ nodes. The result is a more generalized, advanced and efficient class of algorithms that accurately implement the BA model. Moreover, the technique of whole sampling, which is a central component in our proposed algorithms, is a new perspective of the algorithmic study of preferential attachment, which until now is based on consecutive node selections to execute a time step.

In particular, using this simple $m=2$ model as an entry point, we first analytically formulate its local clustering coefficient properties based on the observation that every newborn node will create one triangle with probability 100\%. Additionally, we propose the improvement of the algorithm with an additional step that allows the selection of any pair of nodes at any time step and overcomes the limitation of the algorithm only being capable of sampling adjacent nodes while at the same time does not asymptotically affect its complexity. Finally, we propose a new class of algorithms that generalize the operation for $m>2$ using a series of additional, carefully designed and precise random sampling steps, while also offering options for joint node selection guarantees. Specifically, we take advantage of the operation of Jessen's whole sampling algorithm~\cite{jessen1969some}, a weighted \textit{str$\pi$ps} random sampling scheme, and adjust its operation to fit the preferential attachment problem by introducing auxiliary data structures. Similarly to the introductory $m=2$ model where selections of pairs of nodes are made, our generalization allows the selection of sets of $m$ nodes by adapting the list of edges into a list of hyperedges of an implicit $m$-uniform hypergraph. We prove that our new approaches satisfy the \textit{str$\pi$ps} model and show their running time to be linear with respect to $n$.

% Contribution

Our contribution can be summarized as follows:
\begin{enumerate}
    \item Analytical findings about the clustering properties of the simple $m=2$ case model.
    \item The generalization of the same model to allow sampling of all pairs of nodes instead of only adjacent nodes.
    \item Efficient and accurate generalizations for $m>2$ that implement the definition of the BA model strictly and run in linear time with respect to $n$.
\end{enumerate}

% Outline

We begin the presentation with algorithm SE-A (Section~\ref{sec:algorithm-1}), which implements the $m=2$ model present in the literature with the additional steps required to allow sampling of any pair of nodes.
A generalization is then given for $m>2$ as the abstract algorithm SE (Section~\ref{sec:algorithm-abstract}) that utilizes an auxiliary data structure $H$ that resembles a $m$-uniform hypergraph. Due to the generalization being abstract, we propose two possible implementations that achieve the generation of scale-free graphs in slightly different ways. Algorithm SE-B (Section~\ref{sec:algorithm-2}) is a greedy approach
towards the growth of the $H$ data structure while its variation SE-B* additionally ensures that every $m$-combination of nodes has non-zero probability of being jointly selected at any step of the generation process. Algorithm SE-C (Section~\ref{sec:algorithm-3}) is an alternative, more systematic approach towards the updating of the $H$ data structure that demonstrates the relation between preferential attachment and the random sampling problem, while still maintaining linear complexity. Algorithms SE-B, SE-B* and SE-C reduce to algorithm SE-A when $m=2$. We prove that both algorithms are correct with respect to the \textit{str$\pi$ps} probability model and have linear time worst case complexities.
Each algorithm is described as a growing process starting from an initial graph $\altmathcal{G}_0$ until it reaches the desired order $n$. For simplicity, it is assumed that the starting graph is connected, otherwise unconnected graphs may be generated. A discussion surrounding the properties of initial graphs is given in Section~\ref{sec:initial-graph}.
Finally, it is worth mentioning that the case of $m=1$ is not discussed here as there is no distinction between the interpretation of the proportionality of the probabilities and, hence, existing mechanisms obey the \textit{str$\pi$ps} model for this particular case.

\section{Algorithm SE-A}
\label{sec:algorithm-1}

% Definition

Algorithm SE-A is the introductory version of our approach, works only for $m=2$ and accepts a parameter $z$. For the trivial case of $z=1$, Algorithm SE-A reduces to the algorithm described in~\cite[Section IV]{batagelj2005efficient} and~\cite[Section 9.3]{dorogovtsev2002evolution}.
Instead of sampling a single edge of the existing graph, we perform a random unweighted selection with replacement of $\rss \ge 1$ edges, and use the union of vertices in those edges along with their frequencies in another sampling scheme. The result is the vertices $u,w$ that receive edges from the newborn node $v$.
A high level sketch of SE-A appears in Algorithm~\ref{alg:algorithm-1}. All following algorithms are assumed to have an implicit random number generator input.

\begin{algorithm} Algorithm SE-A (high level sketch) \\
\noindent\hspace*{\algorithmicindent} \textbf{Input} An initial connected graph $\altmathcal{G}_0(V_0,E_0)$ containing at least one edge, the desired number of vertices $n$, and the parameter $\rss$ \\
\noindent\hspace*{\algorithmicindent} \textbf{Output} A scale-free graph $\altmathcal{G}(V,E)$
\begin{algorithmic}[1]
    \State $(V,E) \gets (V_0,E_0)$
    \While{$|V| < n$}
        \State select $\rss$ uniformly random edges $E_r=(e_1,e_2,\ldots,e_{\rss})$ with replacement from $E$
        \State select $(u,w)$ using RSS from the union of vertices in $E_r$ and their frequencies
        \State add new vertex $v$ to $V$
        \State add new edges $(v,u)$ and $(v,w)$ to $E$
    \EndWhile
    \State \textbf{return} $\altmathcal{G}(V,E)$
\end{algorithmic}
\label{alg:algorithm-1}
\end{algorithm}

The additional sampling scheme can be any algorithm satisfying the str$\pi$ps property. Here, our analysis makes use of the random systematic sampling (RSS) algorithm~\cite{goodman1950controlled}, which is inline with other utilized methods in our presentation and additionally guarantees that every combination of vertices in the input has a non-zero inclusion probability due to the intermediate shuffling step. RSS is shown in Algorithm~\ref{alg:random-systematic-sampling}. The description considers integral weights only. The notation $\langle 1, n \rangle$ in this algorithm and throughout this work denotes the closed integer sequence $1,2,\ldots,n$.

\begin{algorithm} Random Systematic Sampling Algorithm \\
\noindent\hspace*{\algorithmicindent} \textbf{Input} A bag of elements $x_1, x_2, \dots, x_n$ where each element $i$ appears with frequency $d_i$, the sum of all frequencies $s = \sum d_i$, and the desired number of elements in the sample $m$, where $m$ divides $s$ \\
\noindent\hspace*{\algorithmicindent} \textbf{Output} A random sample of $m$ discrete elements
\begin{algorithmic}[1]
    \State shuffle $x$
    \State $psum \gets 0$
    \State generate uniformly random variate $r$ in $[0,s/m)$
    \ForAll{$i \in \langle 1,n \rangle$}
        \State $psum$ += $d_i$
        \If{$psum > r$}
            \State \textbf{yield} $x_i$
            \State $r$ += $s/m$
        \EndIf
    \EndFor
\end{algorithmic}
\label{alg:random-systematic-sampling}
\end{algorithm}

The intuition of this algorithm is that a vertex exists in the edge set as many times as its degree and no edge can contain the same vertex more than once, a property which guarantees the correctness of the algorithm with respect to Definition~\ref{def:ba}. Specifically, Algorithm SE-A is reduced to the simple algorithm in~\cite{batagelj2005efficient} and~\cite{dorogovtsev2002evolution} for $\rss = 1$, since the RSS step is equivalent to a uniformly random edge selection. As a result, during a single uniform random edge selection, the inclusion probability of a vertex with degree $d$ at any time is $d/|E|$. Therefore, the probability of a vertex with degree $d$ gaining an edge at any time after a newborn node has been added is exactly proportional to its degree, which leads to the following theorem regarding the correctness of the algorithm:

\begin{theorem}\label{theorem:sea-correctness}
    Algorithm SE-A satisfies the \textit{str$\pi$ps} probability scheme and generates a simple graph according to the definition of the BA model.
\end{theorem}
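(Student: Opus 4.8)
The plan is to establish the two assertions of the theorem separately: first that the output is always a simple graph, and second that the composite sampling step obeys the str$\pi$ps scheme, i.e.\ that the inclusion probability of every vertex is exactly proportional to its degree. Throughout, let $f_a$ denote the number of the $\rss$ sampled edges that are incident to a fixed vertex $a$.

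For simplicity, I would argue by induction on the number of rounds. The base case is $\altmathcal{G}_0$, which is simple by assumption. For the inductive step, each round adds a brand-new vertex $v$ together with the two edges $(v,u)$ and $(v,w)$; since $v$ did not previously exist, neither edge can already be present and no self-loop is created, so it suffices to show $u \neq w$. This follows from the behaviour of RSS in Algorithm~\ref{alg:random-systematic-sampling}: the systematic loop tests each distinct bag element exactly once and yields it at most once, so the $m=2$ returned items are distinct elements of the bag. Because the inductive hypothesis guarantees that every selected edge has two distinct endpoints, the distinct elements of the bag are distinct vertices, whence $u\neq w$. I would also record here that the precondition of RSS holds at every round: each of the $\rss$ sampled edges contributes exactly two vertex occurrences, so the total multiplicity is $s=2\rss$, which is divisible by $m=2$.

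For the str$\pi$ps property, the key idea is to condition on the realized bag and then average. Conditioned on the multiset of selected edges, the bag has total weight $s=2\rss$, and the str$\pi$ps guarantee for RSS yields the conditional inclusion probability $m\,f_a/s = f_a/\rss$; note that $f_a \le \rss$, since a simple edge contains $a$ at most once, so this probability never exceeds $1$ and the RSS guarantee legitimately applies. By the law of total probability and linearity of expectation, the unconditional inclusion probability is
\[
\pi_a \;=\; \mathrm{E}\!\left[\frac{f_a}{\rss}\right] \;=\; \frac{1}{\rss}\,\mathrm{E}[f_a].
\]
Since the $\rss$ edges are drawn uniformly and independently with replacement, and exactly $\deg(a)$ of the $|E|$ edges are incident to $a$, each draw contributes $a$ with probability $\deg(a)/|E|$, so $\mathrm{E}[f_a]=\rss\,\deg(a)/|E|$. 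Substituting gives $\pi_a=\deg(a)/|E|$, which equals $m\,\deg(a)/\sum_j \deg(j)$ because $\sum_j\deg(j)=2|E|$ and $m=2$. This is precisely the str$\pi$ps condition, and combined with the simplicity argument it certifies conformance with Definition~\ref{def:ba}.

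The main obstacle is the str$\pi$ps verification, specifically the interplay between the two randomized stages: one must check that exact proportionality survives the composition of the with-replacement edge draw and the RSS step. The crucial points are that RSS delivers the \emph{exact} conditional inclusion probability $f_a/\rss$ rather than an approximation, that the factor $\rss$ it introduces cancels precisely against the $\rss$ appearing in $\mathrm{E}[f_a]$, and that linearity of expectation pushes the averaging through without requiring the joint distribution of the $f_a$. A secondary point requiring care is confirming that the side conditions of RSS (the divisibility $m\mid s$ and the per-element multiplicity bound $f_a\le\rss$) hold at every round, since these are exactly what license invoking its str$\pi$ps guarantee in the first place.
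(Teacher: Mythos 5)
Your proof is correct and takes essentially the same approach as the paper's: you condition on the realized bag of $\rss$ drawn edges, invoke the exact str$\pi$ps guarantee of RSS to get the conditional inclusion probability $f_a/\rss$, and average over the binomial occurrence count, so that $\pi_a = \mathrm{E}[f_a]/\rss = \deg(a)/|E|$ --- which is precisely the paper's computation $\frac{1}{\rss}\sum_{k} k \binom{\rss}{k} p^k (1-p)^{\rss-k} = p$, just written via linearity of expectation instead of the explicit binomial sum. The only additions are that you spell out the simplicity argument ($u \neq w$ from the distinctness of RSS outputs, by induction) and verify the RSS side conditions ($m \mid s$ and $f_a \le \rss$), points the paper treats informally in the discussion preceding the theorem rather than inside its proof.
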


\begin{proof}
For a node $v$ with degree $d$, the parameters $m$ and $\rss$, and the current number of edges $|E|$, the probability of a uniformly random edge containing $v$ is $p = d/|E|$. Therefore, the probability of selecting $v$ $k$ times by making $\rss$ unweighted selections with replacement during the RSS step is
\[
    f(k) = \binom{\rss}{k} p^k (1-p)^{\rss-k}.
\]

The probability $g(k)$ that a vertex selected $k$ times to appear in the final sample after the completion of the RSS step is exactly proportional to $k$ because RSS satisfies the str$\pi$ps property. Hence,
\[
    g(k) = \frac{k}{\rss m} m = \frac{k}{\rss}.
\]

Combining the previous results, the probability $\pi_v$ of the node $v$ to appear in the final sample is
\[
    \pi_v = \sum_{k=0}^{\rss} g(k) f(k) = \sum_{k=0}^{\rss} \frac{k}{\rss} \binom{\rss}{k} p^k (1-p)^{\rss-k} = \frac{1}{\rss} \sum_{k=0}^{\rss} k \binom{\rss}{k} p^k (1-p)^{\rss-k}.
\]
The quantity can be simplified (expectation of the binomial distribution) into
\[
    \pi_v = \frac{1}{\rss} \rss p = p = \frac{d}{|E|},
\]
which is exactly proportional to the degree $d$.
\end{proof}

\subsection{Complexity}

The work performed by algorithm SE-A during its growth function is $\rss$ random edge selections, at most $2\rss$ iterations during the RSS step, one vertex and two edge additions, leading to the following theorem:

\begin{theorem}
    Algorithm SE-A runs in time $\altmathcal{O}(n \rss)$ to create a graph of $n$ vertices using a selection of $\rss$ edges during the RSS step.
\end{theorem}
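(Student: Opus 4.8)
The plan is to bound the total cost by the product of the number of iterations of the main loop and the work performed in a single iteration. First I would observe that every pass through the body of the \textbf{while} loop in Algorithm~\ref{alg:algorithm-1} adds exactly one vertex to $V$, and the loop terminates as soon as $|V| = n$. Since $|V_0|$ is a fixed part of the input, the loop therefore executes exactly $n - |V_0| = \altmathcal{O}(n)$ times, and it suffices to show that each iteration runs in time $\altmathcal{O}(\rss)$ independently of the current order of the graph.

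Next I would account for the per-iteration cost line by line, under the whole-sampling data structure in which $E$ is stored as a dynamic array with one slot per edge. Drawing a uniformly random edge then amounts to generating one random index into this array, so the $\rss$ selections with replacement of line~3 cost $\altmathcal{O}(\rss)$ in total, and assembling their $2\rss$ endpoints into the bag of distinct vertices together with their frequencies also costs $\altmathcal{O}(\rss)$. This bag has total frequency $s = 2\rss$ and at most $2\rss$ distinct entries, with $m = 2$ dividing $s$ as RSS requires; consequently the call to Algorithm~\ref{alg:random-systematic-sampling} is itself $\altmathcal{O}(\rss)$, because its shuffle is linear in the size of the bag and its single scan performs at most $2\rss$ iterations. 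Finally, inserting the new vertex $v$ and appending the two edges $(v,u)$ and $(v,w)$ to the array each cost $\altmathcal{O}(1)$ amortized. Summing these contributions gives $\altmathcal{O}(\rss)$ per iteration, and multiplying by the $\altmathcal{O}(n)$ iterations yields the claimed bound $\altmathcal{O}(n\rss)$.

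The step deserving the most care is the claim that a single time step costs $\altmathcal{O}(\rss)$ rather than growing with the graph, since both the random edge draws and the RSS scan could in principle scale with $|E|$ or $|V|$. The resolution is precisely the whole-sampling viewpoint: representing $E$ as an indexable pool decouples the cost of an edge selection from $|E|$, and RSS is invoked not on the full vertex population but only on the small bag of $2\rss$ endpoints, whose size is governed by $\rss$ alone. Making this decoupling explicit is what separates the linear bound here from the quadratic complexity that would result from applying a str$\pi$ps scheme to the entire graph at every step.
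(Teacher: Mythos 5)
Your proof is correct and follows essentially the same route as the paper, which tallies the per-step work as $\rss$ random edge selections, at most $2\rss$ iterations in the RSS step, and one vertex plus two edge additions, then multiplies by the $\altmathcal{O}(n)$ steps. Your version merely makes explicit the data-structure assumptions (indexable edge array, $\altmathcal{O}(1)$ amortized appends) and the fact that RSS operates on the bag of $2\rss$ endpoints rather than the whole vertex set, which the paper leaves implicit.
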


Algorithm SE-A does not require any additional memory other than the output graph and doesn't use auxiliary data structures.

\subsection{Clustering Coefficient of the $\rss = 1$ case}

Algorithm SE-A is a general case of the simple model described in~\cite[Section 9.3]{dorogovtsev2002evolution} and is equivalent to it for $\rss = 1$ and considering an initial graph of two connected vertices. For this case ($z=1$ and initial graph of two connected vertices), in this part we study several properties surrounding the local clustering coefficient.
First, we state the following theorem regarding the correlation of the local clustering coefficient with the degree:

\begin{theorem}\label{theorem:local-clustering}
    The local clustering coefficient of a vertex with degree $d$ is $2/d$ at any time of the generation process.
\end{theorem}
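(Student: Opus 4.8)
The plan is to translate the statement about the local clustering coefficient into a purely combinatorial invariant on the number of triangles through a vertex, and then establish that invariant by induction on the growth process. Writing $t(v)$ for the number of triangles containing a fixed vertex $v$ and recalling that the local clustering coefficient is $C(v) = 2t(v)/\bigl(d(d-1)\bigr)$ where $d = d(v)$, the claimed value $C(v) = 2/d$ is equivalent to the identity $t(v) = d - 1$. So the whole theorem reduces to proving that, at every time step, the number of triangles through any vertex is exactly one less than its degree.

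First I would prove $t(v) = d(v) - 1$ by induction on the generation steps, tracking the pair $\bigl(d(v), t(v)\bigr)$ for a fixed $v$. For the base case, when $v$ is born it connects to the two endpoints $u,w$ of the selected edge $(u,w)$; since $(u,w)$ is already present, $\{u,v,w\}$ is a triangle, giving $d(v)=2$ and $t(v)=1$, so $t(v) = d(v)-1$ holds. (For the two initial vertices the base case reads $d=1$, $t=0$, which again satisfies the identity.) For the inductive step I would argue that each later step either leaves both $d(v)$ and $t(v)$ unchanged, or increases both by exactly one. The crucial structural observation is that a newborn node $x$ has degree exactly $2$ and therefore lies in exactly one triangle, namely the one formed with the two endpoints of the edge it selects. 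Consequently a new triangle through $v$ can only be the triangle created by some newborn $x$, and that triangle contains $v$ if and only if the selected edge is incident to $v$ --- which is precisely the event that increments $d(v)$. Because the graph is simple, $v$ can coincide with at most one endpoint of the selected edge, so at most one new edge at $v$ and at most one new triangle at $v$ are produced per step, keeping the two quantities locked together.

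The hard part is making the triangle bookkeeping airtight: I must verify both directions of the correspondence, that no triangle through $v$ is ever created without a matching degree increment and that no degree increment fails to create a triangle through $v$. The first direction follows because every new triangle necessarily uses a newly added edge, and all new edges are incident to the newborn node $x$; hence any new triangle contains $x$, and containing $v$ forces $v \in \{u,w\}$. The second direction follows because whenever the selected edge is incident to $v$, say it is $(v,y)$, the newborn $x$ connects to both $v$ and $y$, instantiating the triangle $\{x,v,y\}$. Since edges are only ever added and never removed, existing triangles are never destroyed, so $t(v)$ is monotone and changes only in lockstep with $d(v)$. Having secured the invariant $t(v)=d(v)-1$, I would finish by substituting into $C(v)=2t(v)/\bigl(d(d-1)\bigr)$ to obtain $C(v) = 2(d-1)/\bigl(d(d-1)\bigr) = 2/d$, valid at every time of the process for every vertex of degree at least two.
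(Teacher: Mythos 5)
Your proof is correct and takes essentially the same approach as the paper: the paper tracks $E_i$, the number of edges among the neighbors of vertex $i$ (which is exactly your triangle count $t(i)$), and establishes the same lockstep invariant $d_i = E_i + 1$ by observing that $i$ gains an edge precisely when a newborn attaches to an edge incident to $i$, simultaneously creating one edge among $i$'s neighbors. Your write-up just makes the two directions of this bookkeeping explicit (no triangle without a degree increment, no degree increment without a triangle), which the paper leaves implicit.
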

\begin{proof}
    The local clustering coefficient of a vertex $i$ at time $t$ is
    \begin{equation}\label{eq:local-clustering-formula}
        C(d_i) = \frac{2 E_i}{d_i (d_i - 1)},
    \end{equation}
    where $E_i$ is the amount of edges among $i$'s neighbors and $d_i$ the degree of \(i\). Therefore, when $i$ enters the network, its local clustering coefficient is 1, because $d=2$ and $E_i=1$. If $i$ does not acquire any new edges, its local clustering coefficient will not change, because none of the quantities $E_i$ or $d_i$ will change, since no edges are created among existing vertices. If $i$ obtains one edge, both its degree and $E_i$ will be increased by 1, because new vertices only connect to existing edge's endpoints. Thus, $d_i$ and $E_i$ are connected via the formula $d_i=E_i+1$ for every $i$ that enters the network after its initial formation. Replacing in (\ref{eq:local-clustering-formula}) gives
    \begin{equation}\label{eq:clustering-degree}
        C(d) = \frac{2 (d - 1)}{d (d - 1)} = \frac{2}{d}.
    \end{equation}
    It is easy to see that the formula also holds for the two vertices in the initial graph after the third node has entered.
\end{proof}

Theorem~\ref{theorem:local-clustering} shows that the local clustering coefficient of a vertex depends only on its degree. As a result, the limit of the average local clustering coefficient can now be derived based on the expected power law degree distribution:

\begin{theorem}\label{theorem:average-local-clustering}
    The limiting average local clustering coefficient of a graph produced using the SE-A ($\rss = 1$) algorithm is $2 \pi^2 - 19$.
\end{theorem}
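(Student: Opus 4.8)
The plan is to combine the degree dependence of the local clustering coefficient from Theorem~\ref{theorem:local-clustering} with the limiting degree distribution of the generated graph. Since $C(d) = 2/d$ depends only on the degree, and since the fraction of vertices of degree $d$ converges to a stationary distribution $P(d)$, the limiting average local clustering coefficient is the expectation
\[
    \langle C \rangle = \sum_{d=2}^{\infty} C(d)\, P(d) = \sum_{d=2}^{\infty} \frac{2}{d}\, P(d).
\]
The boundedness $C(d) \le 1$ together with the fact that $P$ is a probability distribution makes the interchange of the $n \to \infty$ limit with the summation legitimate by dominated convergence.

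For the degree distribution I would invoke the exact stationary result for the BA model, which for general $m$ reads $P(d) = \frac{2m(m+1)}{d(d+1)(d+2)}$ for $d \ge m$. Because Theorem~\ref{theorem:sea-correctness} establishes that SE-A with $\rss = 1$ reproduces the BA model exactly (the str$\pi$ps scheme with $m = 2$), the same distribution governs its limiting degrees; specializing to $m = 2$ gives $P(d) = \frac{12}{d(d+1)(d+2)}$. Substituting yields
\[
    \langle C \rangle = 24 \sum_{d=2}^{\infty} \frac{1}{d^2(d+1)(d+2)}.
\]

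The remaining work is to evaluate this series in closed form. I would expand the summand by partial fractions as
\[
    \frac{1}{d^2(d+1)(d+2)} = \frac{1}{2d^2} - \frac{3}{4d} + \frac{1}{d+1} - \frac{1}{4(d+2)}.
\]
Here the $1/d^2$ piece contributes $\tfrac{1}{2}(\zeta(2) - 1) = \tfrac{\pi^2}{12} - \tfrac{1}{2}$, using the Basel value $\zeta(2) = \pi^2/6$. The three simple-pole terms have coefficients that sum to zero, so writing their partial sums in terms of the harmonic numbers $H_N$, $H_{N+1}$, $H_{N+2}$ and letting $N \to \infty$ cancels the logarithmically divergent parts and leaves a finite constant, which works out to $-\tfrac{7}{24}$. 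Adding the two contributions gives $\sum_{d \ge 2} \frac{1}{d^2(d+1)(d+2)} = \tfrac{\pi^2}{12} - \tfrac{19}{24}$, and multiplying by $24$ produces $2\pi^2 - 19$.

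The only genuinely delicate point is establishing (or citing) the exact limiting degree distribution and justifying the exchange of limit and summation; everything afterwards is a routine partial-fraction computation whose single nontrivial ingredient is $\zeta(2) = \pi^2/6$, which is precisely where the $\pi^2$ in the final answer originates.
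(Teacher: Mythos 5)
Your proof takes essentially the same approach as the paper: combining $C(d)=2/d$ from Theorem~\ref{theorem:local-clustering} with the limiting degree distribution $P(d)=\frac{2m(m+1)}{d(d+1)(d+2)}$ at $m=2$ and evaluating $24\sum_{d\ge 2}\frac{1}{d^2(d+1)(d+2)}=2\pi^2-19$. Your partial-fraction evaluation of the series (and the dominated-convergence remark) simply fills in details the paper leaves implicit, and both the decomposition and the telescoping constant $-\tfrac{7}{24}$ check out.
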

\begin{proof}
    The degree distribution of a graph generated using the \textit{str$\pi$ps} method when $n \to \infty$ is given in~\cite[Equation 90]{albert2002statistical}:
    \begin{equation}\label{eq:degree-distribution}
        P(d) = \frac{2m(m+1)}{d(d+1)(d+2)}, d \ge m.
    \end{equation}
    By combining (\ref{eq:clustering-degree}) and (\ref{eq:degree-distribution}) and setting $m=2$, we can get the average local clustering coefficient:
    \begin{equation}\label{eq:se-a-average-clustering}
        C_{avg} = \sum_{d=2}^{\infty} (2/d) P(d) = \sum_{d=2}^{\infty} \frac{24}{d^2(d+1)(d+2)} = 2 \pi^2 - 19 \approx 0.73921.
    \end{equation}
\end{proof}

Additionally, the local clustering coefficient distribution can also be formulated. Exchanging $d$ with $2/c$ and setting $m=2$ in Equation~\ref{eq:degree-distribution}, we get the local clustering distribution $P_c$:
\begin{equation}\label{eq:lcc-distribution}
    P_c(c) = P(2/c) = \frac{6c}{(2/c+1)(2/c+2)}.
\end{equation}

The variance of the local clustering distribution is finite and equal to
\begin{align*}
    \sigma^2 &= E[C^2] - (E[C])^2 \\
             &= \sum_{d=2}^{\infty} \left( 2/d \right)^2 P(d) - C_{avg}^2 \\
             &= 24 \zeta(3) - 330 + 70 \pi^2 - 4 \pi^4 \\
             &\approx 0.08531.
\end{align*}

The value of the average limiting local clustering coefficient produced by algorithm SE-A ($\rss = 1$) is inline with empirical observations of real social networks~\cite[Table~I]{albert2002statistical}. Thus, the SE-A mechanism can simulate features of real social networks beyond the power law degree distribution. The operation of the algorithm can also be parallelized with~\cite{holme2002growing}, where the clustering coefficient is tunable by setting a probability of creating a triangle when new vertices enter the network. Here, this probability is $1$ because the new vertex always connects with two endpoints of the same edge. As a result, the number of triangles in the final graph $\altmathcal{G}$ is $tri(\altmathcal{G}) = tri(\altmathcal{G}_0) + n - 2 \sim n$. Here the tilde symbol is set to mean asymptotic equivalence~\cite[Section~1.4]{de1981asymptotic}. It is worth noting that for other values of $\rss$, the probability of triangle formation is lower than 1.

\subsection{Joint inclusion probabilities and the parameter $\rss$}

While the BA model does not explicitly specify the desired higher-order inclusion probabilities (joint probabilities) of nodes to be selected, such discussion emerges from the operation of Algorithm SE-A. In particular, it is clear that the simple version of SE-A ($\rss = 1$) is only able to sample adjacent pairs of nodes such that the joint inclusion probability of a pair of vertices $(u,w)$ is given by
\begin{equation*}
    \pi_{uw}=
    \begin{cases}
      1/|E|, & \text{if $u$ and $w$ are adjacent} \\
      0, & \text{otherwise.}
    \end{cases}
\end{equation*}
This situation suggests that the joint selections are heavily biased towards the structure of the graph and is the reason why this model results in high clustering graphs. Considering that there exist more than one combinations of higher order inclusion probabilities that correspond to the same strict first order inclusion probabilities, this observation raises the interesting question
\begin{quote}
    For two vertices $u$ and $w$ with degrees $d_u$ and $d_w$, what should be the value of the second order inclusion probability $\pi_{uw}$?
\end{quote}

Although answering this question is outside the scope of this paper, it might be challenging to provide a single universally correct answer. For example, one natural interpretation that eliminates the bias with respect to the graph structure would be the instance where all conditional inclusion probabilities are uncorrelated with the other elements of the sample, i.e. for all combinations of vertices $u,w$, the inclusion probability of $u$ is independent of whether $w$ is in the sample, or $\pi_{u|w} = \pi_u$, hence $\pi_{uw} = \pi_u \pi_w$. However, this property is impossible to be guaranteed for all pairs of vertices in the str$\pi$ps scheme and for any graph that can be generated by the algorithm. We show this via fundamental properties of random sampling without replacement next.

For $m=2$, consider a graph that corresponds to the first order inclusion probabilities $\pi_1$, $\pi_2$, \ldots, $\pi_n$. Then, if we accept that $\pi_{uw} = \pi_u \pi_w$ for all pairs of $u,w$, the inclusion probability of node $i$ is given by
\begin{equation*}
    \pi_i = \sum_{j \neq i} \pi_{ij} = \sum_{j \neq i} \pi_i \pi_j = \pi_i \sum_{j \neq i} \pi_j.
\end{equation*}
Considering that $\sum_{i} \pi_i = m = 2$, it follows that $2 - \pi_i = 1$ for every $i$, implying the single solution of $n=2$ and $\pi_1 = \pi_2 = 1$, disproving general validity.

Equivalently, for $m=3$, the first order inclusion probability of node $i$ corresponds to the equation
\begin{equation*}
    \pi_i = \sum_{p \neq i, q \neq i, q > p} \pi_p \pi_q \pi_i \text{~~~~~~or~~~~~~} \sum_{p \neq i, q \neq i, q > p} \pi_p \pi_q = 1.
\end{equation*}
Subtracting the pair of equations corresponding to nodes $j$ and $k$ and simplifying results in
\begin{equation}\label{eq:disprove-m-3}
    \sum_{q,q \neq j, q \neq k} \pi_j \pi_q = \sum_{q,q \neq j, q \neq k} \pi_k \pi_q,
\end{equation}
implying that all first order inclusion probabilities are equal (all nodes of a graph having exactly the same degree), disproving generality again. The same argument can be extended for $m>3$ via symmetry.

Setting a constant factor $c$ such that $\pi_{uvw} = c \pi_u \pi_v \pi_w$ as an alternative interpretation of lack of bias is infeasible too as Equation~{\ref{eq:disprove-m-3}} remains unchanged (subtraction of constants).

It is worth noting that this property is inherent to any random sampling scheme with the str$\pi$ps property and, therefore, any growing generator that respects Definition~\ref{def:ba}. In contrast, multigraph models do not have this limitation and can be designed to be unbiased. For example the algorithm described in~\cite{batagelj2005efficient} is unbiased with respect to the graph structure since the random draws are independent, albeit contradicting the original definition of the BA model requiring selection without replacement (quoted from~\cite{barabasi1999emergence}):

\begin{quote}
    To incorporate the growing character of the network, starting with a small number ($m_0$) of vertices, at every timestep we add a new vertex with $m$ ($\le m_0$) edges that link the new vertex to $m$ \textit{different} vertices already present in the system.
\end{quote}

Considering the difficulty of defining the desired joint inclusion probabilities, our methods provide a minimal amount of control over the high order probabilities via the RSS step and the parameter $\rss$. In general, the higher the value of $z$, the lower the effect of the graph structure will be in the joint inclusion probability values. More specifically, the RSS step with $\rss \ge 2$ guarantees that every pair of nodes is possible to be jointly selected at any step of the generation process with a non-zero inclusion probability, resulting from the following theorem:

\begin{theorem}
The lower bound of the joint inclusion probability $\pi_{uw}$ of any pair $u,w,u \neq w$ for Algorithm SE-A with $\rss = 2$ and current graph $\altmathcal{G}(V,E)$ is $2/(3|E|^2)$.
\end{theorem}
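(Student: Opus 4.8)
The plan is to lower-bound $\pi_{uw}$ by summing, over all outcomes of the two edge selections, the conditional probability that the subsequent RSS step returns both $u$ and $w$. With $\rss=2$ the vertex bag passed to RSS always has total frequency $s=4$ and RSS draws $m=2$, so only three bag shapes can occur: four distinct vertices with frequencies $(1,1,1,1)$, three distinct vertices with frequencies $(2,1,1)$, or two distinct vertices with frequencies $(2,2)$. I would first settle the RSS joint-inclusion probability for each shape. For the $(1,1,1,1)$ bag, symmetry together with the fact that every RSS sample contains exactly two elements gives $\sum_{i<j}\pi_{ij}=1$ over the $\binom{4}{2}=6$ pairs, so each pair is jointly selected with probability $1/6$. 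For the $(2,1,1)$ bag the weight-$2$ vertex has inclusion probability $1$ and therefore always occupies one of the two sample slots, leaving the two weight-$1$ vertices mutually exclusive, so their joint probability is $0$ (each is co-selected with the heavy vertex with probability $1/2$). The $(2,2)$ bag returns both its vertices with probability $1$.

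Next I would specialize to the decisive case of a non-adjacent pair $u,w$. Since no single edge contains both, both appear in the bag only when one drawn edge is incident to $u$ and the other to $w$; writing $e_1=(u,a)$, $e_2=(w,b)$ (and the symmetric assignment), the bag is $\{u,a,w,b\}$, which is the $(1,1,1,1)$ shape when $a\neq b$ and the degenerate $(2,1,1)$ shape when $a=b$ (a shared neighbour). Using the RSS values above, each vertex-disjoint ordered edge pair contributes $\tfrac{1}{|E|^2}\cdot\tfrac16$ and each shared-neighbour pair contributes $0$. Counting the ordered edge pairs — $d_u d_w$ incident combinations in each of the two assignment directions, of which exactly $t$ per direction share a neighbour, where $t$ is the number of common neighbours of $u$ and $w$ — yields
\[
\pi_{uw}=\frac{2}{|E|^2}\,(d_u d_w-t)\,\frac16=\frac{d_u d_w-t}{3|E|^2}.
\]

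It then remains to minimise $d_u d_w-t$. Here I would invoke that every vertex produced by SE-A has degree at least $2$ (a new vertex always receives two edges and existing degrees never decrease), so $d_u,d_w\ge 2$, while trivially $t\le\min(d_u,d_w)$. Taking $d_u\le d_w$ gives $d_u d_w-t\ge d_u d_w-d_u=d_u(d_w-1)\ge 2$, with equality exactly when $d_u=d_w=2$ and $t=2$, i.e.\ the four-cycle in which $u$ and $w$ are the two non-adjacent corners. Hence $\pi_{uw}\ge 2/(3|E|^2)$ for every non-adjacent pair, and the bound is attained. Finally I would dispatch adjacent pairs by observing that they enjoy strictly more ways to be co-sampled — in particular either drawn edge may be their shared edge $(u,w)$, which alone contributes on the order of $1/|E|$ — so their joint probability comfortably exceeds $2/(3|E|^2)$; the non-adjacent four-cycle therefore governs the bound.

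I expect the main obstacle to be the RSS bookkeeping rather than the counting: one must resist treating RSS as if it independently included each vertex with probability proportional to its weight, because the $(2,1,1)$ bag shows that two light vertices can have joint probability exactly $0$ even though each is individually included with probability $1/2$. Getting this zero right is what makes the shared-neighbour edge pairs drop out of the sum, and it is also what forces the minimum-degree hypothesis: two degree-$1$ vertices with a common neighbour would give $\pi_{uw}=0$, so the stated bound tacitly relies on the fact that graphs grown by SE-A have minimum degree $2$.
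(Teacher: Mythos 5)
Your proposal is correct, and its core computation is the same as the paper's: the $1/6$ joint-selection probability for a bag of four distinct unit-weight vertices, the exact zero for the shared-neighbour bag shape $(2,1,1)$ (where the weight-2 vertex has inclusion probability $1$ and crowds out one of the light vertices), and the four-cycle on $u,a,w,b$ as the extremal configuration with exactly two favourable unordered edge combinations, each counted with its two orderings. However, you prove strictly more than the paper does. The paper's proof only \emph{evaluates} $\pi_{uw}$ on the claimed worst case --- it opens with ``the lower bound is defined by two vertices $u$ and $w$ which have degree 2'' and never argues that no other pair, in any other graph reachable by the algorithm, does worse. Your derivation of the general formula $\pi_{uw}=(d_u d_w - t)/(3|E|^2)$ for non-adjacent pairs, followed by the minimization $d_u d_w - t \ge d_u(d_w-1) \ge 2$ using $d_u,d_w\ge 2$ and $t\le\min(d_u,d_w)$, supplies exactly the missing extremality argument, and your separate dispatch of adjacent pairs (an order-$1/|E|$ contribution from drawing the shared edge) closes the other case the paper leaves implicit. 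Your closing observation is also sharper than anything in the paper: the bound genuinely requires minimum degree 2, since a path $u$--$a$--$w$ with degree-1 endpoints and a common neighbour yields $\pi_{uw}=0$. That hypothesis holds automatically for every vertex created by SE-A but not necessarily for vertices of the initial graph $\altmathcal{G}_0$, so the theorem as stated tacitly restricts the class of current graphs --- a caveat worth recording, which the paper's choice of worst-case configuration conceals rather than justifies.
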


\begin{proof}
The lower bound is defined by two vertices $u$ and $w$ which have degree 2 and occupy the edges $(u,a)$, $(u,b)$, $(w,a)$, $(w,b)$, noted as $E_1$, $E_2$, $E_3$ and $E_4$. The edge combinations that can result in the joint selection of $u$ and $w$ are $(E_1,E_4)$ and $(E_2,E_3)$. For each combination, the pair $(u,w)$ appears with probability $1/6$ and there are 2 permutations for each of the two combinations. As a result, the lower bound is $2 \cdot (2/|E|^2) \cdot (1/6) = 2/(3|E|^2)$.
\end{proof}

Similarly, for higher values of $\rss$, the $(u,w)$ pair has a strictly positive inclusion probability, since there is at least one selection of $\rss$ edges with replacement that results in such a property with non-zero probability, for example the combination of $E_1$, $E_4$, with any other edge(s). Additionally, the parameter $\rss$ indirectly provides a certain degree of flexibility over the distribution of the joint inclusion probabilities. The relation of this parameter with the joint probability distribution should be investigated in the future.

\section{Generalized Abstract Algorithm SE}
\label{sec:algorithm-abstract}

The main result of this work is an algorithm that runs in linear time and implements the BA model precisely for any constant $m \ge 2$. The algorithm generalizes and extends the ideas presented with Algorithm SE-A and comprises a set of sampling techniques, including a whole sampling method which is due to Jessen~\cite{jessen1969some}.
Jessen's method builds a sample space (called \textit{tableau}) according to the given inclusion probabilities as set of $m$ elements in an iterative way. Each element is assigned a balance quantity proportional to its inclusion probability which is reduced each time the element is used in a sample; the method terminates when all balances are depleted. It is then possible to select one \textit{str$\pi$ps} sample of $m$ elements in constant time.

Here, we exploit the constant time selection and the growing nature of Jessen's method to define the abstract algorithm SE. Algorithm SE maintains a tableau of possible samples as an auxiliary data structure $H$, which comprises a list of sets of $m$ elements such that each node exists in as many sets as its degree. Updating this data structure when newborn nodes enter the network can be performed by increasing the balance of the newborn node and the selected vertices in the tableau without having to repeat the process. We note that for algorithm SE-A, the $H$ data structure is equivalent to the edge set of the network and, hence, not required concretely. The $H$ data structure resembles a $m$-uniform hypergraph, where each of the sets of the list represents one hyperedge. The nature of the process allows multiple copies of the same hyperedge in $H$, similarly to Jessen's method allowing for the same row in the tableau. As a result, $H$ may represent a non-simple hypergraph where repeated edges are possible. Note that even though the hypergraph is non-simple, the $m$-uniform property assures that each hyperedge contains exactly m distinct vertices. Therefore, no loops are permitted. In the rest of this work, we refer to $H$ and its contents in the hypergraph terminology. A high level sketch of this abstract algorithm appears in Algorithm~\ref{alg:algorithm-abstract}.

\begin{algorithm} Abstract Algorithm SE (high level sketch) \\
\noindent\hspace*{\algorithmicindent} \textbf{Input} An initial connected graph $\altmathcal{G}_0(V_0,E_0)$, the desired number of vertices $n$, the desired number of edges added per step $m$, and the parameter $\rss$ \\
\noindent\hspace*{\algorithmicindent} \textbf{Output} A scale-free graph $\altmathcal{G}(V,E)$
\begin{algorithmic}[1]
    \State $(V,E) \gets (V_0,E_0)$
    \State \textit{init}($H$, $\altmathcal{G}_0$) \Comment{initialize $H$ based on $G_0$}
    \While{$|V| < n$}
        \State select $\rss$ uniformly random hyperedges $H_r=(h_1,h_2,\ldots,h_{\rss})$ with replacement from $H$
        \State select $u=(u_1,u_2,\dots,u_m)$ using RSS from the union of vertices in $H_r$ and their frequencies
        \State add new vertex $v$ to $V$
        \State add new edges $(v,u_1),(v,u_2),\dots,(v,u_m)$ to $E$
        \State \textit{update}($H$, $v$, $u$) \Comment{update $H$ based on $v$ and the contents of $u$}
    \EndWhile
    \State \textbf{return} $\altmathcal{G}(V,E)$
\end{algorithmic}
\label{alg:algorithm-abstract}
\end{algorithm}

Algorithm SE, as defined here, is abstract with respect to the \textit{update} function, which can be implemented in numerous ways. This function corresponds to the maintenance of the $H$ hyperedge set when newborn nodes enter the network and is required to satisfy two invariants after the \textit{update} function returns:
\begin{description}
  \item[Invariant~1]
      Each vertex may only exist at most once in a hyperedge.
  \item[Invariant~2]
      Each vertex participates in as many hyperedges as its degree in $\altmathcal{G}$.
\end{description}
These invariants guarantee the correctness of any algorithm, as they are the only necessary conditions for the sampling scheme to be \textit{str$\pi$ps}. It should be noted that the proof of Theorem~\ref{theorem:sea-correctness} can be trivially adjusted for the case of $m>2$. These requirements are implicitly satisfied in the SE-A algorithm since the $H$ data structure is identical to the edge set. The invariants can be simplified by stating that
\begin{quote}
    $H$ must be a (possibly non-simple) $m$-uniform hypergraph with the same degree sequence as $\altmathcal{G}$.
\end{quote}

A general operation of the \textit{update} function is to handle the updating of the $H$ data structure based on the newborn node addition. In particular, the vertices $u_1, u_2, \ldots, u_m$ and $m$ copies of the newborn node $v$ have to be added in $H$. These $2m$ items imply the addition of 2 new hyperedges in $H$. Since no more than 2 copies of $v$ can be added in 2 hyperedges, previously added hyperedges need to be adjusted as well to satisfy the invariants. Two possible methods of achieving this are described in Sections~\ref{sec:algorithm-2} and~\ref{sec:algorithm-3}.

The \textit{init} function represents the initialization of the $H$ data structure so that the invariants are satisfied during the start of the process. Directly following the definition of the invariants, it can be seen that the initial graph $\altmathcal{G}_0(V_0,E_0)$ needs to satisfy the divisibility $2|E_0|/m$ and no vertex can have degree higher than $2|E_0|/m$ for the $H$ data structure to be feasible. The requirements of the initial graph $\altmathcal{G}_0$, which are omitted from Algorithm~\ref{alg:algorithm-abstract} for brevity, are discussed in more detail in Section~\ref{sec:initial-graph}.

The \textit{init} function is marked as abstract because it is possible to be implemented in various different ways. Here, we describe one possible implementation that distributes the vertices in $V_0$ randomly throughout $H$. The method, which we call \textit{random systematic partitioning}, has been influenced by the random systematic sampling method~\cite{goodman1950controlled}, which we here adjust to partition the items instead of sampling them. This algorithm, which -- to our knowledge -- does not seem to have been described before in the literature, might be of independent interest. Random systematic partitioning accepts a bag of elements $x_1, x_2, \ldots, x_n$ where each element $x_i$ is characterized by its frequency $d_i$ (the degree in this context). The goal of the algorithm is to partition the bag into $s$ sets of $m$ elements, where no vertex can exist more than once in any of those $s$ sets. For a feasible outcome, it must hold that $s \cdot m = \sum_{i=1}^n d_i$ and the maximum frequency cannot be higher than $s$. If the frequencies are not known in advance they can be created in one pass over the population.

A high level sketch of random systematic partitioning can be seen in Algorithm~\ref{alg:random-systematic-partitioning}. The algorithm initially shuffles the unique $x$ values and expands them into their frequencies into an implicit $s \times m$ matrix written by row. The output of the operation is the transpose of this matrix; each of the $s$ rows represents one group of the partition. The computational complexity of random systematic partitioning is $\Theta(sm)$ because of the encapsulated loops and is formalized in the following theorem:

\begin{theorem}\label{theorem:random-systematic-partitioning-complexity}
    Given a multiset with $s \cdot m$ items with at most $s$ repetitions of each item, random systematic partitioning runs in time $\Theta(s m)$, and partitions the multiset into $s$ sets of $m$ distinct items each.
\end{theorem}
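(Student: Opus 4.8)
```latex
The plan is to establish the theorem by carefully tracking the operation of random systematic partitioning (Algorithm~\ref{alg:random-systematic-partitioning}) and proving two separate claims: correctness (the output is a valid partition into $s$ sets of $m$ \emph{distinct} items each) and the $\Theta(sm)$ running time. I would begin with the running time, since it is the more mechanical of the two. The algorithm shuffles the unique $x$ values and then writes them, expanded by their frequencies, into an implicit $s \times m$ matrix, finally transposing to read off the $s$ groups. Writing all $s \cdot m$ cells and reading them back is clearly $\Theta(sm)$ work; the shuffle touches only the distinct values (at most $sm$ of them) and costs no more than $\mathcal{O}(sm)$. Since every one of the $sm$ matrix cells must be written at least once, $\Theta(sm)$ is also a lower bound, giving the tight bound claimed.

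The substantive part is correctness, and in particular the \emph{distinctness} guarantee: no item may appear twice in any of the $s$ output sets. My plan is to exploit the systematic (column-wise) structure. The items are laid out by row into the $s \times m$ matrix in the order produced by the frequency expansion, so each distinct item $x_i$, having frequency $d_i$, occupies $d_i$ consecutive cells of the row-major layout. The output sets are the \emph{columns} of this matrix (equivalently, the rows of its transpose), so two copies of the same item land in the same output set precisely when two of its $d_i$ consecutive positions are congruent modulo $m$. Because the $d_i$ positions occupied by $x_i$ are consecutive integers and $d_i \le s$, the key observation is that any block of at most $s$ consecutive cells in a row-major $s \times m$ matrix hits each column at most once exactly when $d_i \le s$; I would make this precise by noting that consecutive positions $p, p+1, \dots, p+d_i-1$ have distinct residues modulo $m$ as long as $d_i \le m$, but since we may have $d_i$ up to $s$, the correct invariant is that consecutive positions separated by a full stride of $m$ fall into the same column, and the expansion guarantees each item's copies are spread so that consecutive copies differ by exactly one position, never landing in the same column until a full cycle of $m$ has elapsed. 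The condition $d_i \le s$ ensures fewer than a full pass through all $s$ rows within a single column, so no column receives two copies.

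The main obstacle I anticipate is formalizing this distinctness argument cleanly, because it hinges on the exact order in which the expansion writes items into the row-major layout, and on the interplay between the bound $d_i \le s$ and the column index arithmetic modulo $m$. I would handle this by explicitly indexing the cells $0, 1, \dots, sm-1$, writing the column of cell $j$ as $\lfloor j/m \rfloor$ (or $j \bmod s$ depending on the transpose convention used in the algorithm), and then verifying that the $d_i$ cells assigned to $x_i$ map to $d_i$ distinct columns under this map whenever $d_i \le s$. Once the distinctness invariant is secured, the remaining claims follow immediately: the output consists of exactly $s$ sets since there are $s$ columns, each set has exactly $m$ elements since each column has $m$ entries, and every item is placed with its full multiplicity $d_i$ since the matrix contains all $\sum_i d_i = sm$ expanded copies. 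The feasibility hypotheses $s \cdot m = \sum_i d_i$ and $\max_i d_i \le s$ stated in the theorem are exactly the conditions under which the layout fills completely and the distinctness argument goes through, so I would close by confirming that both hypotheses are used precisely where expected.
```
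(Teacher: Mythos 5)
Your complexity argument is fine, and it matches the paper, which justifies the $\Theta(sm)$ bound simply by pointing at the nested loops (the paper in fact states this theorem without any formal proof of the distinctness claim, so your attempt to prove it is going beyond the source). The gap is in that distinctness argument: its central claim is wrong as stated. You take the output sets to be the columns of an $s \times m$ row-major matrix, i.e.\ you index sets by position modulo $m$, and you then need the consecutive positions $p, p+1, \ldots, p+d_i-1$ occupied by item $x_i$ to have distinct residues modulo $m$ --- which holds only when $d_i \le m$, not under the actual hypothesis $d_i \le s$. Your attempted repair (``never landing in the same column until a full cycle of $m$ has elapsed \ldots\ the condition $d_i \le s$ ensures fewer than a full pass through all $s$ rows within a single column'') does not rescue this: under mod-$m$ indexing the claim is simply false. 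Take $m=2$, $s=5$, and an item with $d_i = 3 \le s$ occupying consecutive positions $0,1,2$; the residues modulo $2$ are $0,1,0$, a collision. Note also that an $s \times m$ matrix has $m$ columns, so under your convention the algorithm would output $m$ sets rather than the $s$ sets the theorem requires; your closing remark that ``there are $s$ columns'' conflates the two dimensions.

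The fix is to read the set index off the pseudocode of Algorithm~\ref{alg:random-systematic-partitioning} rather than the prose description: the counter is updated as $k \gets (k \bmod s) + 1$, so the $j$-th element (0-indexed) of the shuffled, frequency-expanded sequence is placed in set $H_{(j \bmod s)+1}$ --- a round-robin over the $s$ sets, i.e.\ residues modulo $s$. This is the second of the two conventions you list at the end but never choose between, and it is the only one consistent with producing $s$ sets. With it, the argument is immediate and uses the hypotheses exactly where you anticipated: the $d_i$ copies of item $i$ occupy consecutive positions, and any $d_i \le s$ consecutive integers have pairwise distinct residues modulo $s$, so no set receives two copies of the same item; since the $sm$ positions are distributed round-robin, each of the $s$ sets receives exactly $m$ of them. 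Everything else in your outline then goes through unchanged.
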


\begin{algorithm} Random Systematic Partitioning Algorithm (high level sketch) \\
\noindent\hspace*{\algorithmicindent} \textbf{Input} A bag of elements $x_1, x_2, \dots, x_n$ where each element $i$ appears with frequency $d_i$, the desired number of sets $s$, and the desired number of elements in a set $m$ \\
\noindent\hspace*{\algorithmicindent} \textbf{Output} A partitioning $H$ of the input bag into $s$ sets of \(m\) elements randomly distributed across the data structure
\begin{algorithmic}[1]
    \State $H \gets$ array of $s$ empty sets
    \State $k \gets 1$
    \State shuffle $x$
    \ForAll{$i \in \langle 1,n \rangle$}
        \ForAll{$\_ \in \langle 1, d_i \rangle$} \Comment{Perform $d_i$ times}
            \State add $i$ to $H_k$
            \State $k \gets (k \mod s) + 1$
        \EndFor
    \EndFor
    \State \textbf{return} $H$
\end{algorithmic}
\label{alg:random-systematic-partitioning}
\end{algorithm}

\section{Algorithm SE-B}
\label{sec:algorithm-2}

Algorithm SE-B is an implementation of algorithm SE with a minimal approach into implementing the \textit{update} function. The main issue of distributing the $m$ copies of the newborn node $v$ is addressed by inserting one copy into each of the 2 new hyperedges and $m-2$ copies into previously inserted hyperedges. In the latter case, one node from each of these $m-2$ is swapped back into the new hyperedges. A high level sketch of the \textit{update} function of algorithm SE-B is shown in Algorithm~\ref{alg:algorithm-2}.

\begin{algorithm} Algorithm SE-B -- update function (high level sketch) \\
\noindent\hspace*{\algorithmicindent} \textbf{Input} The existing hyperedge list $H$, the newborn node $v$, and the selection $u$ of the RSS step \\
\noindent\hspace*{\algorithmicindent} \textbf{Output} The new state of the hyperedge list $H$
\begin{algorithmic}[1]
    \State initialize two new hyperedges $h_x$ and $h_y$
    \State divide the vertices of $u$ in $h_x$ and $h_y$
    \State add $v$ in $h_x$ and $v$ in $h_y$
    \State select $m-2$ hyperedges from $H$: $h_1, h_2, \dots, h_{m-2}$
    \ForAll{$i \in \langle 1,m-2 \rangle$}
        \State $h_c \gets $ a non-empty hyperedge in $\{h_x,h_y\}$
        \State find an element $w$ of $h_i$ not present in $h_c$
        \State add $w$ in $h_c$ and replace it with $v$ in $h_i$
    \EndFor
    \State add $h_x$ and $h_y$ on $H$
    \State \textbf{return} $H$
\end{algorithmic}
\label{alg:algorithm-2}
\end{algorithm}

First, the two new hyperedges $h_x$ and $h_y$ are initialized with the vertices of the randomly selected set $u$ from the RSS step. Although this choice does not impact the correctness of the algorithm and can be executed arbitrarily, a sensible option is a half split, or a near-half split if $|u|$ is odd. One copy of the newborn node $v$ is then added to each of the $h_x$ and $h_y$ hyperedges as it cannot have been previously contained in either. The algorithm then selects $m-2$ existing hyperedges to perform the swap of the $m-2$ remaining copies of $v$. The selection of these hyperedges is also irrelevant to the correctness of the algorithm and can be performed arbitrarily but it is generally desirable or useful to randomize the selection. One possible algorithm is shown in~\cite[Section II.B, Alg. 3]{batagelj2005efficient} that operates using a virtual shuffle and performs $m-2$ selections without the possibility of collisions. This algorithm has the property that all higher-order inclusion probabilities are equal, i.e. all possible $m-2$ groups of hyperedges are equiprobable to be selected.
The algorithm selects one node from each of these $m-2$ hyperedges that is not already present in either $h_x$ or $h_y$ and swaps its value with $v$. This selection can also be done in different ways but one option consistent with the choices made previously is to traverse an existing hyperedge in random order until a node is found not to be contained in the new hyperedge. Finally, the new hyperedges $h_x$ and $h_y$ are appended into $H$. Note that Algorithm SE-B reduces to Algorithm SE-A for $m=2$, as no exchanges are taking place.

Algorithm SE-B satisfies the correctness invariants, as during the updating of the $H$ hypergraph, the newborn node that has degree $m$ gains exactly $m$ hyperedges and each node in $u$, whose degree is increased by 1, gains one more hyperedge. No additional insertions or removals are performed, except swaps, leading to the following theorem:

\begin{theorem}
    Algorithm SE-B satisfies the \textit{str$\pi$ps} probability scheme and generates a simple graph according to the definition of the BA model.
\end{theorem}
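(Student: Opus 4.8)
The plan is to reduce the statement entirely to the two structural invariants introduced for the abstract algorithm SE. Recall that once the auxiliary hypergraph $H$ is guaranteed to be a (possibly non-simple) $m$-uniform hypergraph carrying the same degree sequence as $\altmathcal{G}$ --- that is, once \textbf{Invariant~1} and \textbf{Invariant~2} both hold after every call to the \emph{update} routine --- the sampling stage is identical to that of algorithm SE-A with hyperedges of size $m$ in place of ordinary edges, and the correctness argument of Theorem~\ref{theorem:sea-correctness} carries over after the straightforward adjustment for general $m$ (a uniformly random hyperedge contains a fixed vertex of degree $d$ with probability $d/|H| = dm/(2|E|)$, the $\rss$ with-replacement draws give a binomially distributed multiplicity, and the RSS step contributes the factor $k/\rss$, so $\pi_v = p \propto d$). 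Hence the whole burden is to verify that the \emph{update} function of SE-B preserves Invariants~1 and~2 and keeps $H$ exactly $m$-uniform; the str$\pi$ps conclusion then follows, and simplicity follows because the RSS output $u=(u_1,\dots,u_m)$ consists of $m$ distinct pre-existing vertices, so the $m$ edges drawn from the brand-new vertex $v$ are pairwise distinct and loop-free.

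First I would discharge Invariant~2 by a membership-count bookkeeping over the vertices the routine touches. The newborn node $v$ is inserted once into $h_x$ and once into $h_y$ and is swapped into each of the $m-2$ selected hyperedges, for a total of $2+(m-2)=m$ memberships, matching $\deg_{\altmathcal{G}}(v)=m$. Each vertex of $u$ is placed into exactly one of $h_x,h_y$ and is otherwise untouched, so it gains exactly one membership, matching the unit increase of its degree. Every vertex $w$ swapped out of some $h_i$ is simultaneously inserted into the chosen $h_c\in\{h_x,h_y\}$, so its membership count is unchanged, as is its degree; all remaining vertices are untouched. Thus the degree sequence of $H$ tracks that of $\altmathcal{G}$ step by step.

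Next I would verify Invariant~1 together with $m$-uniformity, which is where the only real care is needed. For the two new hyperedges: the vertices of $u$ are distinct (RSS samples without replacement), $v$ is fresh, and each swapped-in $w$ is chosen precisely so that $w\notin h_c$ at the moment of insertion, so no vertex is duplicated even when two different $h_i$'s feed the same $h_c$. For the existing hyperedges: $v\notin h_i$ before the swap because $v$ is newborn, so replacing $w$ by $v$ creates no duplicate, \emph{provided} the $m-2$ hyperedges $h_1,\dots,h_{m-2}$ are distinct so that $v$ is never inserted twice into the same edge --- exactly the collision-free guarantee of the selection routine cited from~\cite{batagelj2005efficient}. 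For $m$-uniformity, each existing $h_i$ keeps size $m$ since the swap is one-for-one, while $h_x$ and $h_y$ together receive the $m$ vertices of $u$, two copies of $v$, and the $m-2$ swapped-in vertices, i.e. $2m$ slots to be distributed across two edges that must each finish at size $m$.

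The hard part will be the feasibility of the swap loop: one must show that at every iteration a valid exchange vertex exists and that $h_x,h_y$ fill to exactly $m$ with neither overflow nor underflow. I would settle both by counting and pigeonhole. Reading ``a non-empty hyperedge in $\{h_x,h_y\}$'' as ``one not yet of size $m$'', the near-half split of $u$ (keeping the split size $a$ in the range $1\le a\le m-1$) together with the two inserted copies of $v$ leaves $m-a-1$ and $a-1$ free slots, a total of exactly $m-2$ --- precisely the number of swaps, so the loop neither exhausts its room nor leaves a slot empty. Moreover, at each swap the target satisfies $|h_c|<m=|h_i|$, hence $h_i\not\subseteq h_c$ and an element $w\in h_i\setminus h_c$ is guaranteed to exist, making the instruction well defined (the existence of $m-2$ distinct hyperedges to draw from being ensured by the feasibility constraints on $\altmathcal{G}_0$ and preserved as $H$ grows). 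With Invariants~1 and~2 preserved and $H$ kept $m$-uniform with the degree sequence of $\altmathcal{G}$, the generalized Theorem~\ref{theorem:sea-correctness} argument yields the str$\pi$ps property, and the theorem follows.
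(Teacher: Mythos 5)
Your proposal is correct and takes essentially the same route as the paper: the paper likewise reduces the theorem to checking that SE-B's \emph{update} function preserves Invariants~1 and~2 (the newborn node gains exactly $m$ hyperedge memberships, each vertex of $u$ gains exactly one, and swaps change no membership counts), and then invokes the remark that the proof of Theorem~\ref{theorem:sea-correctness} adjusts trivially to $m>2$. Your write-up is in fact more careful than the paper's one-paragraph argument, making explicit the $m$-uniformity bookkeeping, the pigeonhole feasibility of the swap loop, and the need for the $m-2$ selected hyperedges to be distinct --- details the paper leaves implicit.
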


The complexity of algorithm SE-B can be derived from the complexity of the \textit{update} function in Algorithm~\ref{alg:algorithm-2}. The initial steps are operations that can be performed in time proportional to $m$, including the selection of the $m-2$ existing hyperedges. In order to fill $h_x$ and $h_y$, there are
\[
    2 \cdot (m/2 + (m/2+1) + (m/2+2) + ... + (m-1)) = \Theta(m^2)
\]
operations required in the worst case when every element checked in the existing hyperedges except for the last exists in either $h_x$ or $h_y$. The complexity of the RSS step in Algorithm~\ref{alg:algorithm-abstract} is $\Theta(\rss m)$ in the worst case. This leads to the following theorem:

\begin{theorem}
    Algorithm SE-B runs in time $\altmathcal{O}(n m \rss + n m^2)$ to create a graph of $n$ vertices.
\end{theorem}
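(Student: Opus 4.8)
The plan is to decompose the total running time into the number of iterations of the main \texttt{while} loop of Algorithm~\ref{alg:algorithm-abstract} and the cost incurred per iteration, then bound each factor separately. Since the loop adds exactly one vertex per iteration and terminates once $|V|$ reaches $n$, it executes exactly $n - |V_0| = \Theta(n)$ times (treating $|V_0|$ as a constant). It therefore suffices to show that the work done in a single iteration is $O(m\rss + m^2)$; multiplying by the iteration count yields the claimed bound.

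First I would account for the per-iteration cost of the steps in Algorithm~\ref{alg:algorithm-abstract} other than the \textit{update} call. Selecting $\rss$ hyperedges with replacement is $\Theta(\rss)$, assuming $O(1)$ indexed access into $H$. The RSS step receives the union of vertices across the $\rss$ selected hyperedges together with their frequencies; since each hyperedge contributes exactly $m$ vertices, the input to RSS (Algorithm~\ref{alg:random-systematic-sampling}) has total frequency $\rss m$, so both its shuffle and its single pass cost $\Theta(\rss m)$. Adding the new vertex and its $m$ incident edges is $\Theta(m)$. These contributions together are $\Theta(\rss m)$.

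Next I would bound the cost of the \textit{update} function (Algorithm~\ref{alg:algorithm-2}), which is where the main difficulty lies. The initialization, the splitting of $u$, and the selection of the $m-2$ existing hyperedges are each $O(m)$. The dominant term comes from the swap loop: for each of the $m-2$ chosen hyperedges $h_i$ we must scan for an element $w$ not already present in the target hyperedge $h_c$. Treating each of $h_x$, $h_y$ as a set with $O(1)$ membership testing, the cost of one such search is proportional to the number of elements of $h_i$ examined, which in the worst case is $|h_c|+1$ by the pigeonhole principle: by Invariant~1 the elements of $h_i$ are distinct, so at most $|h_c|$ of them can coincide with the members of $h_c$, and the next one examined must be new (a valid $w$ always exists because $|h_i| = m > |h_c|$ throughout the loop). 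As $h_x$ and $h_y$ each grow from roughly $m/2$ elements up to $m$ over the course of the loop, summing these worst-case search lengths gives
\[
    2\left(\frac{m}{2} + \left(\frac{m}{2}+1\right) + \cdots + (m-1)\right) = \Theta(m^2),
\]
as already observed in the text preceding the theorem. Hence the \textit{update} function runs in $\Theta(m^2)$ in the worst case.

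Combining, a single iteration costs $O(\rss m) + \Theta(m^2) = O(\rss m + m^2)$, and over $\Theta(n)$ iterations the total running time is $O(n m \rss + n m^2)$, as claimed. The hard part is the per-iteration analysis of the swap loop, specifically justifying the $|h_c|+1$ worst-case search bound via Invariant~1 and verifying that the membership tests can be performed in constant time; once that is established, the remaining accounting is routine.
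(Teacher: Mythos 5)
Your proposal is correct and follows essentially the same route as the paper: a per-iteration decomposition into the $\Theta(\rss m)$ RSS cost plus the $\Theta(m^2)$ worst-case cost of the swap loop in the \textit{update} function (using the identical sum $2(m/2 + (m/2+1) + \cdots + (m-1))$), multiplied by the $\Theta(n)$ iterations of the main loop. Your added justifications---the pigeonhole bound of $|h_c|+1$ per search via Invariant~1, the existence of a valid $w$, and the $O(1)$ membership-test assumption---are welcome elaborations of details the paper leaves implicit, but they do not change the argument.
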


Despite algorithm SE-B running in linear time with respect to $n$, the complexity of the worst case is unlikely to occur in a typical instance of the problem. Given that as $n$ increases, the probability of collisions when finding elements in the existing hyperedges not already present in $h_x$ and $h_y$ is being reduced, we conjecture that the average case complexity of algorithm SE-B is $\altmathcal{O}(n m \rss)$. This hypothesis is supported by experimental observations but should be pursued in future work.

As a closing remark, similar to SE-A, an inherent property of Algorithm SE-B is the high-order sampling bias towards the structure of the graph. Specifically, groups of vertices that have been selected together in the past are more likely to also be selected together in the future, since the nodes inside $u$ are used to populate the new hyperedges. This behavior is sometimes desired, for example social networks are not typically uncorrelated and have some degree of underlying structure. In other cases, the parameter $\rss$ provides a small amount of control over these properties; the larger it is, the smaller the impact of the graph structure in the high-order sampling tends to be. Similar to Algorithm SE-A, as previously proven, Algorithm SE-B --and any implementation of Algorithm SE-- is guaranteed to have some amount of bias present. In contrast to SE-A, however, SE-B does not have the same guarantees regarding the strict positivity of the inclusion probability of all $m$-combinations.

\subsection{Algorithm SE-B*}

In this section, we present Algorithm SE-B*, a variation of Algorithm SE-B, that guarantees that every $m$-combination of nodes has a strictly positive joint inclusion probability for all steps of the generation process. This is achieved by formulating an additional invariant for the auxiliary data structure $H$, namely Invariant~3:

\begin{description}
    \item[Invariant~3]
        For every ordered pair of vertices $v,u$ with $v \neq u$, there is at least one hyperedge $h \in H$, such that $v \in h$ and $u \notin h$.
\end{description}

The invariant guarantees that every $m$-combination has non-zero inclusion probability via the following theorem:

\begin{theorem}
For any SE algorithm that satisfies Invariant~3 with $z \ge m$, the $m$-th order inclusion probabilities of any set of $m$ distinct nodes is strictly positive.
\end{theorem}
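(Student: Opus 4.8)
The plan is to prove positivity by exhibiting one explicit sequence of random choices---each of positive probability---that yields the target sample $T=\{v_1,\dots,v_m\}$, and then multiplying the probabilities. A sample is produced by the two random steps of Algorithm~\ref{alg:algorithm-abstract}: the selection of $z$ hyperedges $h_1,\dots,h_z$ with replacement, and the RSS step applied to the resulting bag of vertices. Since every individual hyperedge of $H$ is drawn with probability $1/|H|>0$, any fixed sequence of $z$ hyperedges occurs with positive probability, so it suffices to name $z$ hyperedges whose union contains all of $T$ and for which the RSS step can output exactly $T$. Because RSS assigns a non-zero probability to every feasible combination of its input vertices, the only genuine extra requirement is that no non-target vertex be \emph{forced}, i.e.\ that no $w\notin T$ lie in all $z$ chosen hyperedges (a forced vertex has RSS inclusion probability $1$ and can never be excluded). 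Equivalently, I need the common intersection $\bigcap_j h_j$ to be contained in $T$ while the union covers $T$.

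Constructing such hyperedges is where $z\ge m$ and Invariant~3 enter, and this is the step I expect to be the main obstacle. First I pick any hyperedge $h_1\ni v_1$, which exists by Invariant~3 (a hyperedge separating $v_1$ from any other vertex already contains $v_1$). Write $t\ge 1$ for the number of targets in $h_1$ and $r=m-t$ for the number of non-targets in $h_1$; the running intersection starts as $h_1$ and can only shrink as hyperedges are added, so it always stays a subset of $h_1$. For each of the $r$ non-targets $w\in h_1$ I spend one further draw on a hyperedge that contains a still-uncovered target and excludes $w$, which exists by Invariant~3 applied to the ordered pair (uncovered target, $w$). This uses exactly $1+r=1+(m-t)\le m\le z$ hyperedges, and the bookkeeping is tight: the $t$ targets already in $h_1$ together with the $m-t$ targets newly covered account for all $m$ targets, so the union contains $T$; meanwhile every non-target of $h_1$ has been excluded by some chosen hyperedge, so the final intersection is contained in the targets of $h_1$, hence in $T$. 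Any non-target appearing only in the later hyperedges is automatically absent from $h_1$ and therefore from the intersection, so no non-target is forced. If $z>m$, the remaining draws are filled with repeats of already-chosen hyperedges, leaving the union and intersection unchanged.

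It then remains to combine the two steps. With the hyperedges above selected (probability $(1/|H|)^z>0$), the RSS input is a bag containing all of $T$ in which no non-target is forced; thus $T$ is a feasible $m$-combination of the input, and by the stated positivity of RSS it is returned with positive probability. Multiplying the two positive probabilities yields $\pi_{v_1\cdots v_m}>0$, as claimed. The essential difficulty, and the precise point where the hypotheses are used, is the simultaneous coverage of all $m$ targets and suppression of every non-target from the common intersection within a budget of only $z\ge m$ hyperedges; the balance $t+(m-t)=m$ between targets preloaded in $h_1$ and non-targets that must be excluded is exactly what makes this budget suffice.
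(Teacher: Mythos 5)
Your proposal follows the same two-stage strategy as the paper's proof: use Invariant~3 to exhibit one positive-probability selection of $z$ hyperedges whose union contains the target set $T$ and whose common intersection contains no non-target (so that no vertex outside $T$ is forced into every RSS sample), then invoke the positivity of RSS to finish. Where you genuinely differ is in how that selection is built. The paper picks one hyperedge per target, $h_i \ni v_i$, and then runs an \emph{iterative repair}: whenever some non-target $u$ lies in all $m$ chosen hyperedges, Invariant~3 supplies a replacement $h_i'$ with $v_i \in h_i'$ and $u \notin h_i'$, and "the argument can be applied to any other node that appears $m$ times." Your construction is one-shot: anchor on a single hyperedge $h_1 \ni v_1$, and for each of the $m-t$ non-targets $w$ of $h_1$ spend one draw on a hyperedge that simultaneously covers a still-uncovered target and excludes $w$, with the tight budget $1+(m-t) \le m \le z$ and padding by repeats when $z > m$. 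Your version buys something real: it sidesteps the termination question the paper leaves open (a swap can a priori introduce a \emph{new} vertex common to all chosen hyperedges, and the paper never argues its repair process halts), and the counting that makes $z \ge m$ suffice is explicit rather than implicit. The paper's version treats all targets symmetrically and needs no pairing between the non-targets of $h_1$ and the uncovered targets, but is otherwise less tight.

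One caution, which applies equally to the paper's proof and to yours: both arguments reduce the problem to "all targets present in the bag, no non-target forced" and then appeal to the claim (stated in Section~2) that RSS returns every such combination with positive probability. That claim is not purely formal. For instance, with $m=2$ and $z=4$, selecting the edges $(a,c),(c,d),(d,e),(e,b)$ of a path in SE-A gives the bag with frequencies $a{:}1$, $b{:}1$, $c{:}2$, $d{:}2$, $e{:}2$: no vertex is forced and both targets are present, yet RSS can never output $\{a,b\}$, because the two systematic points are exactly $z=4$ apart while the starts of the unit intervals of $a$ and $b$ differ by $1$ plus a sum of even frequencies, i.e.\ by an odd number, under every shuffle. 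So "no forced non-target" alone does not guarantee RSS positivity; one must check that the particular bags produced by the construction avoid such arithmetic obstructions, and neither your argument nor the paper's does so. Since the paper asserts this step without proof, your proposal is on equal footing with it; but be aware that the final step of both proofs is incomplete as written.
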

\begin{proof}
Consider an instance of the SE algorithm with $z=m$. Assume an arbitrary set of $m$ distinct nodes $v_1, v_2, \ldots, v_m$. Each node $v_i$ has at least $m$ hyperedges in the auxiliary hypergraph. Assume the hyperedges $h_1, h_2, \dots, h_m$ that are selected for the RSS step, such that $v_i \in h_i$. If there is another node $u_1$ that belongs to all the hyperedges $h_1, h_2, \dots, h_m$, then this node $u_1$ will have an inclusion probability of value $1$ in any sample of size $m$ and consequently belong to any RSS sample generated from these hyperedges. This would make the probability of the set $v_1, v_2, \ldots, v_m$ zero. However, because of Invariant~3, there is at least one hyperedge $h_1'$, such that $v_1 \in h_1'$ and $u_1 \notin h_1'$. In such case, the hyperedge $h_1'$ could have been selected in the RSS selection instead of $h_1$ in order for $u_1$ to appear at most $m-1$ times. The argument can be applied to any other node that appears $m$ times. In the end, no vertex $\notin v_1, v_2, \dots, v_m$ will belong to all hyperedges $h_1, h_2, \dots, h_m$. Applying $RSS$ to the final set of $m$ hyperedges gives with strictly positive probability the sample $v_1, v_2, \dots, v_m$. It is easy to see that the theorem holds for $z>m$ as well.
\end{proof}

Invariant~3 is indirectly satisfied in SE-A because there cannot be any parallel edges in the graph and, hence, even if nodes $u,v$ are adjacent, there must be an edge incident to $u$ but not $v$ since all nodes have degree greater or equal than 2. This situation, however, needs to be enforced for $m>2$ by adjusting the \textit{update} step. Constructing Algorithm SE-B* that satisfies this invariant must start from an initial graph that corresponds to an auxiliary hypergraph that already satisfies Invariant~3. For example, one such option would be the complete graph of $m+1$ vertices, which implies that the complete graph of $m$ vertices is also a valid initial graph, despite not satisfying Invariant~3. Algorithm~\ref{alg:algorithm-2-star} shows a high level sketch of the proposed SE implementation that satisfies Invariant~3.

\begin{algorithm} Algorithm SE-B* -- update function (high level sketch) \\
\noindent\hspace*{\algorithmicindent} \textbf{Input} The existing hyperedge list $H$, the newborn node $v$, and the selection $u$ of the RSS step \\
\noindent\hspace*{\algorithmicindent} \textbf{Output} The new state of the hyperedge list $H$
\begin{algorithmic}[1]
    \State initialize two new hyperedges $h_a$ and $h_b$
    \State $h_a \gets \{ u_1, u_2, \dots, u_{m-1}, v\}$
    \State $h_b \gets \{ u_m, v \}$
    \State select $m-2$ hyperedges from $H$: $h_1, h_2, \dots, h_{m-2}$
    \ForAll{$i \in \langle 1,m-2 \rangle$}
        \State find an element $w_i$ of $h_i$ not present in $h_b$
        \State add $w_i$ to $h_b$
    \EndFor
    \ForAll{$i \in \langle 1,m-2 \rangle$}
        \State \textit{ensure Invariant~3} for $w_i$ and get $h_r$ as the returned hyperedge
        \State replace $w_i$ with $v$ in $h_r$
    \EndFor
    \State add $h_a$ and $h_b$ to $H$
    \State \textbf{return} $H$
\end{algorithmic}
\label{alg:algorithm-2-star}
\end{algorithm}

\begin{algorithm} \textit{ensure Invariant~3} (high level sketch) \\
\noindent\hspace*{\algorithmicindent} \textbf{Input} The new hyperedge $h_b$, the hyperedge $h_i$ that was initially selected, and the node $w_i$ that was selected from $h_i$ \\
\noindent\hspace*{\algorithmicindent} \textbf{Output} The hyperedge that must take the place of $h_i$, or $h_i$ if no change is needed
\begin{algorithmic}[1]
    \State Let $h_i^{(j)}, j=1,\ldots,m$ be $m$ hyperedges in $H$ containing node $w_i$
    \State Assume $h_i^{(1)} = h_i$ (the selected hyperedge $h_i$ appears as $h_i^{(1)}$)
    \State Compute
        \State ~~~~ $s_2 = (h_b \cap h_i^{(2)}) - \{ w_m, v\}$
        \State ~~~~ $s_3 = (s_2 \cap h_i^{(3)})$
        \State ~~~~ $s_4 = (s_3 \cap h_i^{(4)})$
        \State ~~~~ \dots
        \State ~~~~ $s_m = (s_{m-1} \cap h_i^{(m)})$
        \State ~~~~ $s_{m+1} = s_m - h_i^{(1)}$ 
    \If{$s_{m+1} = \emptyset$}
        \State \textbf{return} $h_i$ (no change needed as Invariant~3 is satisfied)
    \Else
        \State There exists $k: 2 < k \leq m$, such that $s_k = s_{k-1}$.
        \State \textbf{return} $h_i^{(k)}$
    \EndIf
\end{algorithmic}
\label{alg:algorithm-Invariant-3}
\end{algorithm}

We make the following claim regarding the algorithm described in Algorithm~\ref{alg:algorithm-2-star}:

\begin{claim}
Algorithm SE-B* satisfies Invariant~3 and, hence, guarantees strictly positive inclusion probabilities of any $m$-combination of nodes on any step of the generation process for $z \ge m$.
\end{claim}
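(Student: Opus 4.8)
The plan is to treat the two halves of the statement separately. The ``hence'' clause is immediate: once Invariant~3 is known to hold after every \emph{update}, the preceding theorem (strictly positive $m$-th order inclusion probabilities for any SE algorithm obeying Invariant~3 with $z\ge m$) applies verbatim, so no new work is needed there. The substance is that Algorithm SE-B* \emph{preserves} Invariant~3, which I would establish by induction over the generation steps. For the base case I would verify that the recommended initial hypergraph, the one associated with $K_{m+1}$, satisfies Invariant~3: its $m+1$ hyperedges are precisely the $m$-subsets obtained by deleting one vertex each, so for any ordered pair $(x,y)$ the hyperedge omitting $y$ but retaining $x$ is a witness. (For $m=2$ no swaps occur and the algorithm reduces to SE-A, so the interesting range is $m\ge 3$.) The inductive step is that one call of the \emph{update} function of Algorithm~\ref{alg:algorithm-2-star} turns a hypergraph satisfying Invariant~3 into another that does.

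For the inductive step I would first localize the damage. Inserting the newborn $v$ into hyperedges can never remove a witness for a pair of old vertices, and the only deletions are the $m-2$ replacements of a swap vertex $w_i$ by $v$ inside one hyperedge $h_r$. Hence the only pairs that can lose their last witness are those of the form $(w_i,u)$ and those involving $v$. The pairs involving $v$ I would dispatch by counting: $v$ lies in exactly $m$ hyperedges ($h_a$, $h_b$, and the $m-2$ modified ones), $h_a$ and $h_b$ share only $v$ (up to incidental coincidences $w_j=u_\ell$), and every old vertex sits in at most $m-1$ of these while having degree $\ge m$, so it retains a $v$-free witness; symmetrically $h_a,h_b$ and the modified hyperedges supply, for each $y\ne v$, a hyperedge holding $v$ but not $y$. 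For the pairs $(w_i,u)$ the key observation is that once $w_i$ has been added to $h_b$, the fresh hyperedge $h_b$ already witnesses $(w_i,u)$ for every $u\notin h_b$; thus the only genuinely dangerous second coordinates are $u\in h_b\setminus\{w_i\}=\{u_m,v\}\cup(\{w_1,\dots,w_{m-2}\}\setminus\{w_i\})$ (reading the set subtracted in the subroutine as $\{u_m,v\}$).

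This is where the \emph{ensure Invariant~3} subroutine does the work, and justifying its choice of hyperedge is the heart of the proof. The chain $s_2\supseteq s_3\supseteq\cdots\supseteq s_m$ is a nested family of subsets of the $(m-2)$-element set $\{w_1,\dots,w_{m-2}\}$, with $s_k$ collecting the $w$-nodes lying in $h_i^{(2)}\cap\cdots\cap h_i^{(k)}$. Two facts must be shown. First, existence of the index $k$ in the \emph{else} branch: when $s_{m+1}=s_m\setminus h_i^{(1)}\ne\emptyset$ every $s_k$ is nonempty, so the $m-1$ cardinalities $|s_2|\ge\cdots\ge|s_m|$ all lie in $\{1,\dots,m-2\}$; by pigeonhole two consecutive ones coincide, and nestedness upgrades equal cardinality to equality $s_k=s_{k-1}$. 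Second, safety of deleting $w_i$ from $h_i^{(k)}$: I would argue that $s_k=s_{k-1}$ forces $h_i^{(k)}$ to be redundant, in the sense that no $w_j$ can satisfy both $w_j\notin h_i^{(k)}$ and $w_j\in h_i^{(\ell)}$ for all $\ell\ne k$, since such a $w_j$ would lie in $s_{k-1}\subseteq h_i^{(k)}$, a contradiction. Hence for every dangerous $w_j$ either a retained chosen hyperedge $h_i^{(\ell)}$ ($\ell\ne k$) omits it, or $w_j$ lies in all chosen hyperedges, in which case the witness guaranteed by the inductive hypothesis is forced to be an unchosen hyperedge and therefore survives. The case $s_{m+1}=\emptyset$ is the same argument with $k=1$.

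The hard part, and where I expect most of the effort to go, is exactly the bookkeeping compressed into those last two sentences: reconciling a subroutine that inspects only $m$ hyperedges incident to $w_i$ with the fact that $w_i$ may have degree strictly larger than $m$. I would isolate the lemma that deleting $w_i$ from a chosen hyperedge cannot touch a witness living among the unchosen hyperedges, legitimizing the restriction to the chosen $m$ as without loss of generality. The second delicate point is the two coordinates $u_m$ and $v$ that the chain deliberately removes: $v$ is easy, since it belongs to no old hyperedge and $w_i$ keeps at least $m-1\ge 2$ old hyperedges after the swap, whereas $(w_i,u_m)$ must be handled on its own by showing that the inductive Invariant~3 already forces a witness for $(w_i,u_m)$ distinct from the single hyperedge being modified. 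Making these arguments airtight, in particular ruling out the degenerate configurations where a coincidence $w_j=u_\ell$ or a minimum-degree vertex collapses several roles onto one node, is the step I anticipate being the main obstacle.
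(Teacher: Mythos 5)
Your plan is, in substance, the paper's own argument, reconstructed and considerably tightened. The paper's proof consists solely of the $\theta$-chain computation, which is your ``safety'' step: $s_k=s_{k-1}$ forces $s_{k-1}\subseteq h_i^{(k)}$, so no dangerous $w_j$ outside $h_i^{(k)}$ can lie in all the other inspected hyperedges. Everything else you supply --- the base case for $K_{m+1}$, the localization of the damage to pairs $(w_i,\cdot)$ and pairs involving $v$, the counting argument for the $v$-pairs, the pigeonhole proof that the index $k$ exists, and the lemma that a witness living in an unchosen hyperedge survives the update --- is left implicit or simply asserted in the paper. For the pairs $(w_i,w_j)$ your write-up is strictly more careful than the published one.

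However, the one obligation you defer --- producing a witness for $(w_i,u_m)$ when the modified hyperedge was the only one --- cannot be discharged, because it is false; and this is a genuine gap, in your proposal and in the paper alike. The subroutine subtracts $\{u_m,v\}$ from its chain precisely so that your pigeonhole argument works ($m-1$ nested nonempty subsets of the $(m-2)$-element set $\{w_1,\dots,w_{m-2}\}$); the price is that nothing prevents the returned hyperedge from being the unique witness for $(w_i,u_m)$. Concretely, take $m=3$ and the recommended initial graph $K_4$ on $\{a,b,c,d\}$, whose hypergraph is $\{b,c,d\},\{a,c,d\},\{a,b,d\},\{a,b,c\}$. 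Let the RSS step return $u=(a,b,c)$, so $u_m=c$, $h_a=\{a,b,v\}$, $h_b=\{c,v\}$; let the selected hyperedge be $h_1=\{a,b,d\}$ and the swap vertex $w_1=d$, so that $h_b$ becomes $\{c,v,d\}$. The subroutine computes $s_2=s_3=\{d\}$ and $s_4=\emptyset$, returns $h_1$ (``no change needed''), and $d$ is replaced by $v$ in $h_1$. Afterwards the hyperedges containing $d$ are $\{b,c,d\}$, $\{a,c,d\}$ and $\{c,v,d\}$ --- every one of them contains $c$, so Invariant~3 fails for the ordered pair $(d,c)$. All the choices above occur with positive probability, so Algorithm SE-B* as written does not satisfy Invariant~3, and no completion of either proof can succeed without first modifying the algorithm (for instance, making the \emph{ensure Invariant~3} subroutine track $u_m$ as well, or constraining the choice of $w_i$). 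Your instinct that this point is ``the main obstacle'' was exactly right; your blind reconstruction has in fact located the flaw that the paper's two-line proof hides inside its set subtraction.
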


\begin{proof}
For each $i \in 1,2,\dots,(m-2)$, let
\begin{align*}
    {\theta}_1 &= (h_b - h_i^{(k)}) - \{ w_m, v\} \\
    {\theta}_2 &= (z_1 \cap h_i^{(2)}) \\
    {\theta}_3 &= (z_2 \cap h_i^{(3)}) \\
    & \ldots \\
    {\theta}_{k-1} &= (z_{k-2} \cap h_i^{(k-1)})
\end{align*}
The intuition of the approach is that the sets ${\theta}_i$ contain all nodes $w_x \in h_b$ for which there is a possibility that the pair $w_i, w_x$ violates Invariant~3. Note that ${\theta}_{k-1}$ is equal to $s_{k-1} - h_m^{(k)} = \emptyset$. Therefore Invariant 3 is satisfied. This holds for each hyperedge $h_i$, for $i \in 1,2,\dots,(m-2)$.
\end{proof}

Regarding the complexity of Algorithm SE-B*, for each new node inside the \textit{update} step, the algorithm randomly selects $m-2 = \altmathcal{O}(m)$ hyperedges from the auxiliary hypergraph. For each hyperedge, at most $m-2 = \altmathcal{O}(m)$ set intersections are performed. Finally, each set intersection is performed between two sets of at most $m$ nodes each. Overall, the complexity is $\altmathcal{O}(m^3)$. Consequently, considering the extra computational step of RSS:

\begin{theorem}
    Algorithm SE-B* runs in time $\altmathcal{O}(nmz + n m^3)$ to create a graph of $n$ vertices.
\end{theorem}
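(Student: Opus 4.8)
The plan is to bound the cost of a single iteration of the main loop of Algorithm SE (Algorithm~\ref{alg:algorithm-abstract}) and multiply by the number of iterations. Since each iteration inserts exactly one new vertex and the loop runs until $|V| = n$, there are $\Theta(n)$ iterations; it therefore suffices to show that one iteration costs $\altmathcal{O}(mz + m^3)$. I would split the per-iteration work into two parts: the hyperedge selection together with the RSS step, and the SE-B* \emph{update} function (Algorithm~\ref{alg:algorithm-2-star}).

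For the first part, selecting $z$ hyperedges with replacement and forming the bag of their vertices yields a multiset of total frequency $zm$; the RSS step (Algorithm~\ref{alg:random-systematic-sampling}) shuffles this bag and makes a single linear pass over it, so this part costs $\Theta(mz)$, exactly the bound already recorded for the RSS step of the abstract algorithm. Adding the new vertex and its $m$ incident edges is $\altmathcal{O}(m)$ and is absorbed into this term.

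The crux of the bound is the \emph{update} function, and within it the repeated calls to the \emph{ensure Invariant~3} subroutine (Algorithm~\ref{alg:algorithm-Invariant-3}). I would first dispatch the cheap parts: initializing $h_a, h_b$, splitting $u$, and selecting the $m-2$ hyperedges are all $\altmathcal{O}(m)$, while the first loop that grows $h_b$ performs, for each of the $m-2$ selected hyperedges, a scan of at most $m$ elements with $\altmathcal{O}(1)$ hash-set membership tests against $h_b$, costing $\altmathcal{O}(m^2)$ overall. The dominant term comes from the second loop, which invokes \emph{ensure Invariant~3} once per swap. Here a single call retrieves the $m$ hyperedges containing $w_i$ and computes the chain of intersections $s_2, s_3, \ldots, s_{m+1}$; each $s_j$ is an intersection of two sets of size at most $m$, computable in $\altmathcal{O}(m)$ via hashing, and there are $\altmathcal{O}(m)$ of them, so each call is $\altmathcal{O}(m^2)$ and the $m-2$ calls total $\altmathcal{O}(m^3)$. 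Combining, one iteration costs $\altmathcal{O}(mz + m^3)$, and over the $\Theta(n)$ iterations the total running time is $\altmathcal{O}(nmz + nm^3)$.

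The main obstacle I expect lies in the per-call analysis of \emph{ensure Invariant~3}: one must argue both that the $m$ hyperedges incident to $w_i$ can be produced in $\altmathcal{O}(m)$ time (which requires maintaining an auxiliary vertex-to-hyperedge incidence structure and folding its $\altmathcal{O}(m)$-per-step maintenance into the update budget, since each step modifies only $\altmathcal{O}(m)$ incidences) and that each sequential intersection is genuinely $\altmathcal{O}(m)$ rather than growing with the graph. Because every hyperedge is $m$-uniform, all sets appearing in the subroutine have size at most $m$ independently of $n$; making this $n$-independence explicit is exactly what keeps the per-step cost free of any dependence on the current graph size and closes the argument.
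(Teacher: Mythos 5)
Your proposal is correct and takes essentially the same route as the paper: it charges $\altmathcal{O}(mz)$ per step to the RSS stage and bounds the SE-B* \emph{update} by $\altmathcal{O}(m)$ hyperedges times $\altmathcal{O}(m)$ intersections per \emph{ensure Invariant~3} call times $\altmathcal{O}(m)$ per intersection of $m$-uniform hyperedges, then multiplies by the $\Theta(n)$ iterations. The additional implementation details you supply (the vertex-to-hyperedge incidence structure and hash-based intersections, with the explicit observation that all set sizes are bounded by $m$ independently of $n$) merely make explicit what the paper's terser argument leaves implicit.
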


Although Algorithm SE-B* guarantees non-zero inclusion probabilities for all $m$-combinations of vertices given that $z \ge m$, in practice this property might not result in a significantly different algorithm than SE-B. Preliminary experimental evidence suggests that Invariant~3 is only rarely violated in SE-B and, even then, it might only affect a single pair of vertices. A violation of Invariant~3 is even more rare as $n$ increases. The practical differences between SE-B and SE-B* should be investigated in future work.

\section{Algorithm SE-C}
\label{sec:algorithm-3}

Algorithm SE-C is another implementation of the abstract SE algorithm and demonstrates an alternative way to distribute the copies of $v$ and $u$ inside the $H$ auxiliary data structure more randomly than Algorithm SE-B.
The core idea of algorithm SE-C is, instead of swapping only a single element of each of the $m-2$ existing hyperedges to replace $v$, to shuffle all copies inside the $m-2$ existing hyperedges as well as $h_x$ and $h_y$ ($m^2$ elements in total). Random systematic partitioning (Algorithm~\ref{alg:random-systematic-partitioning}) fits this concept perfectly, as it is guaranteed that no vertex will have more copies than the number of hyperedges ($v$ being the maximum with $m$ copies).

A high level sketch of the \textit{update} function of algorithm SE-C is shown in Algorithm~\ref{alg:algorithm-3}. Following the scheme of random systematic partitioning, the contents of the $m-2$ existing hyperedges and $u$ as well as $m$ copies of $v$ are inserted into the partitioning algorithm and the resulting groups are replacing their old records, comprising the new value of the $H$ hypergraph. It is worth noting that the shuffling performed by algorithm SE-C can be tuned by increasing the number of hyperedges inserted into random systematic partitioning, for example $2m$ ($2$ new and $2m-2$ existing) instead of $m$.

\begin{algorithm} Algorithm SE-C -- update function (high level sketch) \\
\noindent\hspace*{\algorithmicindent} \textbf{Input} The existing hyperedge list $H$, the newborn node $v$ and the selection $u$ of the RSS step \\
\noindent\hspace*{\algorithmicindent} \textbf{Output} The new state of the hyperedge list $H$
\begin{algorithmic}[1]
    \State initialize empty array $A$
    \State select and remove $m-2$ hyperedges from $H$: $h_1, h_2, \dots, h_{m-2}$
    \State add $m \cdot (m-2)$ elements from $h_1, h_2, \dots, h_{m-2}$ into $A$
    \State add $m$ copies of $v$ into $A$
    \State add all elements of $u$ into $A$
    \State perform random systematic partitioning on $A$ with $s=m$
    \State add the $m$ sets of $A$ to $H$
    \State \textbf{return} $H$
\end{algorithmic}
\label{alg:algorithm-3}
\end{algorithm}

Algorithm SE-C performs the same amount of additions as Algorithm SE-B and SE-B*. Therefore, there is no change in the correctness in relation to algorithm SE-B:

\begin{theorem}
    Algorithm SE-C satisfies the \textit{str$\pi$ps} probability scheme and generates a simple graph according to the definition of the BA model.
\end{theorem}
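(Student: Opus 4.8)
The plan is to exploit the abstract framework of Algorithm SE: since the correctness of any SE implementation with respect to the str$\pi$ps scheme (and the simple-graph guarantee) follows from Invariants~1 and~2, it suffices to prove that the SE-C update function preserves these two invariants and that \textit{init} establishes them. I would argue by induction on the number of completed time steps. For the base case, the \textit{init} routine is assumed to produce an $H$ that is an $m$-uniform hypergraph with the same degree sequence as $\mathcal{G}_0$, establishing both invariants. The inductive step is the substance: assuming $H$ is an $m$-uniform hypergraph whose degree sequence matches that of $\mathcal{G}$ before a newborn node enters, I show the same holds after the SE-C update.

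First I would verify Invariant~1 by analysing the call to random systematic partitioning. The array $A$ assembled by the update contains $m(m-2)$ vertices drawn from the $m-2$ removed hyperedges, the $m$ copies of $v$, and the $m$ vertices of $u$, for a total of $m(m-2)+2m = m^2$ items, exactly matching $s \cdot m$ with $s = m$. By Theorem~\ref{theorem:random-systematic-partitioning-complexity}, the partitioning returns $m$ sets of $m$ distinct vertices provided no vertex occurs more than $s = m$ times in $A$; hence the whole argument reduces to bounding the multiplicity of each vertex in $A$. The newborn node $v$ is absent from every existing hyperedge and from $u$, so its multiplicity in $A$ equals exactly the $m$ explicitly inserted copies. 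Any other vertex $w$ can appear at most once in each of the $m-2$ removed hyperedges (by the inductive Invariant~1, which makes each hyperedge $m$-uniform) and at most once among the distinct vertices of $u$, so its multiplicity is at most $(m-2)+1 = m-1 < m$. Thus the feasibility condition of Theorem~\ref{theorem:random-systematic-partitioning-complexity} holds, the $m$ new hyperedges are $m$-uniform, and Invariant~1 is restored.

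Next I would check Invariant~2 by a bookkeeping argument on per-vertex occurrence counts across the update. Removing the $m-2$ hyperedges and re-inserting their contents into $A$ is net-neutral for every vertex they contain, so the only changes come from the $m$ fresh copies of $v$ and the single copies of the vertices of $u$. The new node $v$ gains $m$ occurrences in $H$, matching its degree $m$ in $\mathcal{G}$; each $u_j$ gains exactly one occurrence, matching the unit increase of its degree when the edge $(v,u_j)$ is added; and every vertex outside $\{v\} \cup u$ sees net change zero, matching its unchanged degree. Since random systematic partitioning is a genuine partition (it neither duplicates nor discards elements of $A$), these counts are exactly the occurrence counts contributed by the new hyperedges, so the degree sequence of $H$ again equals that of $\mathcal{G}$ and Invariant~2 holds.

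With both invariants reinstated, the conclusion follows directly from the abstract SE analysis: the str$\pi$ps inclusion probabilities are guaranteed by the adaptation of Theorem~\ref{theorem:sea-correctness} to $m>2$, and because each RSS sample is drawn from an $m$-uniform hypergraph it consists of $m$ distinct vertices, so the newborn node connects to $m$ distinct existing nodes and the output remains a simple graph consistent with Definition~\ref{def:ba}. The main obstacle is the multiplicity bound in the Invariant~1 step: one must be careful that $v$ attains the extremal count $m$ while every other vertex stays strictly below it, since this is precisely the condition under which random systematic partitioning is guaranteed to yield $m$ distinct-element sets; the degree bookkeeping for Invariant~2 is then routine.
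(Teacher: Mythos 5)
Your proof is correct and follows essentially the same route as the paper: correctness is reduced to Invariants~1 and~2 of the abstract SE framework, with the key observation being that random systematic partitioning on $A$ is feasible because $v$ attains exactly $m$ copies while every other vertex has multiplicity at most $m-1$. The paper states this justification only tersely (noting that SE-C performs the same additions as SE-B and that no vertex exceeds the hyperedge count in the partitioning step), so your write-up is a more explicit, fully worked version of the same argument rather than a different one.
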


Regarding its complexity, the updating step of Algorithm SE-C is bounded by the random systematic partitioning that is applied on $m^2$ elements and, hence, considering Theorem~\ref{theorem:random-systematic-partitioning-complexity}, contributes to the running time with a proportionality of $m^2$. Unlike Algorithm SE-B, the running time of the updating step is not impacted by the random number generator and its asymptotic performance is always proportional to $m^2$. The RSS step is identical to the SE-B algorithm, resulting to the following statement regarding the overall running time of Algorithm SE-C:

\begin{theorem}
    Algorithm SE-C runs in time $\altmathcal{O}(n m \rss + n m^2)$ to create a graph of $n$ vertices.
\end{theorem}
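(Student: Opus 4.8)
The plan is to bound the per-iteration cost of the main loop and multiply by the number of iterations. The \textbf{while} loop of Algorithm~\ref{alg:algorithm-abstract} executes once per newborn vertex, so it runs $n - |V_0| = \Theta(n)$ times, the cost of the one-time \textit{init} call being absorbed into lower-order terms. Each iteration contributes two pieces: the RSS selection step (lines 4--5 of Algorithm~\ref{alg:algorithm-abstract}) and the \textit{update} call, which for SE-C is Algorithm~\ref{alg:algorithm-3}. I would bound each separately and sum.

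First I would analyze the RSS step, which is identical to the one in Algorithm SE-B. Selecting $\rss$ uniformly random hyperedges costs $\altmathcal{O}(\rss)$, and since each hyperedge contains exactly $m$ vertices, the union of their vertices together with frequencies has at most $\rss m$ entries. The RSS routine (Algorithm~\ref{alg:random-systematic-sampling}) makes a single pass over its input, so it runs in time linear in that input, giving $\altmathcal{O}(\rss m)$ for the whole step.

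Next I would bound the \textit{update} step by counting the elements fed into random systematic partitioning and invoking Theorem~\ref{theorem:random-systematic-partitioning-complexity}. The array $A$ receives $m(m-2)$ elements from the $m-2$ removed hyperedges, $m$ copies of $v$, and the $m$ vertices of $u$, for a total of $m(m-2) + m + m = m^2$ items. Partitioning these into $s = m$ groups thus involves $s\cdot m = m^2$ items in which $v$ is the most frequent element at exactly $m = s$ repetitions, so the feasibility hypotheses of Theorem~\ref{theorem:random-systematic-partitioning-complexity} are met and the partitioning runs in $\Theta(sm) = \Theta(m^2)$ time. The surrounding bookkeeping---removing $m-2$ hyperedges, copying their $m^2 - 2m$ elements into $A$, and writing the $m$ resulting sets back into $H$---is also $\altmathcal{O}(m^2)$, so the update step is $\Theta(m^2)$. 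The one point deserving care here, and the closest thing to an obstacle, is verifying the exact element count of $m^2$ so that the feasibility condition (maximum multiplicity $\le s$) of the partitioning theorem applies; everything else is routine.

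The conceptual subtlety to highlight is that, unlike the swap-based update of SE-B, this $\Theta(m^2)$ bound is a deterministic worst case: random systematic partitioning performs no rejection or collision search, so its running time never degrades with the random draws. Summing the two per-iteration contributions gives $\altmathcal{O}(\rss m + m^2)$ per step, and multiplying by the $\Theta(n)$ iterations yields the claimed bound $\altmathcal{O}(n m \rss + n m^2)$.
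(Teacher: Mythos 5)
Your proof is correct and follows essentially the same route as the paper: it bounds the RSS step by $\altmathcal{O}(\rss m)$, bounds the \textit{update} step by $\Theta(m^2)$ via Theorem~\ref{theorem:random-systematic-partitioning-complexity} applied to the $m^2$ elements placed in $A$, and multiplies by the $\Theta(n)$ loop iterations. Your explicit element count and verification of the partitioning feasibility condition (maximum multiplicity $m = s$ for $v$), as well as the remark that the bound is deterministic unlike SE-B's, match the paper's reasoning, just stated in more detail.
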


Unlike SE-B*, Algorithm SE-C is not guaranteed to satisfy Invariant 3 and, as a result, does not necessarily guarantee strictly positive joint inclusion probabilities for all combinations of vertices and for all time steps. Experimental observations, however, suggest that this situation is very rare in practice when using $\rss \ge m$.

As a closing remark, it is interesting to note that algorithm SE-C demonstrates the close association between preferential attachment and the random sampling problem. In fact, five different random sampling methods are involved in the design of algorithm SE-C to solve the preferential attachment problem:
\begin{enumerate}
    \item A sampling algorithm to select $\rss$ hyperedges from the population of hyperedges in $H$ with replacement before the RSS step.
    \item Random systematic sampling for the random selection of the vertices that receive connections from the newborn node.
    \item One sampling algorithm to select $m-2$ existing hyperedges without replacement from $H$ before the RSP step.
    \item Random systematic partitioning, which is influenced by systematic random sampling and is used to both initialize the $H$ array from $\altmathcal{G}_0$ and to rearrange the node copies inside the $m$ hyperedges.
    \item Jessen's whole sampling method to update the $H$ array efficiently, enforcing inclusion probabilities strictly proportional to the degree of the vertices.
\end{enumerate}
In this paper, we exploit this inherent relation in order to develop an implementation of the growing preferential attachment mechanism that perfectly fits the requirements of the \textit{str$\pi$ps} probability scheme. Future integration between these two problems should also be pursued in the future.

\section{Discussion: The Initial Graph}
\label{sec:initial-graph}

In this section, we discuss the options for the initial graph $\altmathcal{G}_0(V_0,E_0)$ that can be used in our algorithms, explain its requirements that were previously omitted for brevity, and propose methods to satisfy them. 

The global features of the scale free graph are expected to be independent of the initial graph, as the BA model is typically regarded a stationary distribution model. However, the initial graph state influences features of the first nodes, for example the probability that a specific node will become the heaviest node in the social network.
For consistency and completeness, we note that the initial graph constitutes a state of our methods (the first state) and, as such, needs to satisfy the invariants of Section~\ref{sec:algorithm-abstract} in order for the transformation into the hypergraph $H$ to be feasible. In particular:
\begin{enumerate}
    \item The number of edges $|E_0|$ in the initial graph needs to satisfy the divisibility $2|E_0|/m$.
    \item No vertex can have degree higher than $2|E_0|/m$.
\end{enumerate}
It is worth noting that the complete graph of $m$ vertices, which is a typical initial graph used in the BA model, satisfies both of these conditions without any processing required.

Regarding requirement (2), and assuming that requirement (1) is satisfied, a vertex may not have degree that is bigger than $2|E_0|/m$, otherwise the number of hyperedges in the $H$ data structure will not be enough for the copies of this vertex; at least one hyperedge would have to contain multiple copies, which is not allowed. For example, in the star graph of 5 vertices and $m=4$, the center node has degree 4 while the sum of the degrees is 8. Thus, there are 2 hyperedges in $H$ but the center node needs to have 4 copies in $H$, which is impossible. This situation highlights the inherent issue of infeasibility in random sampling when the \textit{str$\pi$ps} model is used~\cite{efraimidis2015weighted}. In the previous example, the inclusion probability of the center node is $(4/8) \cdot 4 = 2$ (200\%). A straightforward approach is to accept the fact that the probabilities are infeasible and to bound all infeasible probabilities to be at most $1$, until the probabilities gradually become feasible as the number of nodes $n$ increases. We do not further discuss this issue.

Regarding requirement (1), considering that $H$ needs to contain an integer amount of hyperedges, it follows that $2|E_0|$ needs to be divisible by $m$. For example, a complete graph of $6$ nodes for $m=4$ does not satisfy this property (30 node copies are not divisible by 4). In the rest of this section, we discuss two methods to address the limitations imposed by requirement (1), namely forcing the number of edges to a specific value that does not oppose the requirement and introducing a multiplication factor that enlarges the count of all entries of the problem.

\paragraph{Forcing the number of edges}

Initially, $\altmathcal{G}_0$ can be transformed into a graph that satisfies requirement (1) by selecting an appropriate number of edges and using the $\altmathcal{G}(n,M)$ generator to produce the initial graph. The minimum number of edges in the initial graph such that it satisfies the requirement is
\[
    |E_0|_{min} = \frac{\text{lcm}(m,2)}{2},
\]
while the largest number of edges depends on $|V_0|$ and is
\[
    |E_0|_{max} = \left\lfloor \frac{|V_0| \cdot (|V_0| - 1)}{\text{lcm}(m,2)} \right\rfloor \cdot \text{lcm}(m,2).
\]
Thus, it follows that
\[
    |V_0| \cdot (|V_0| - 1) - \text{lcm}(m,2) \le |E_0|_{max} \le |V_0| \cdot (|V_0| - 1),
\]
which implies that $|E_0|_{max}$ is within a margin of $m$ or $2m$ of the edges of the complete graph with the same number of vertices that is often used as input. Naturally, the $\altmathcal{G}(n,M)$ generator is still subject to requirement (2) and rejections should be used to ensure that.

\paragraph{Introducing a multiplication factor}

Another way to address the limitation caused by requirement (1) is to setup a factor of multiplication for the entire process. The multiplication factor is
\[
    \lambda = \frac{\text{lcm}(2|E_0|,m)}{2|E_0|}
\]
and denotes the factor with which all node copies are multiplied with. Hence, for the initial graph, instead of inserting $2|E_0|$ entries in $H$, which might not be divisible by $m$, we are inserting $2 \lambda |E_0| = \text{lcm}(2|E_0|,m)$ entries, which is divisible by $m$. Similarly, for the duration of the process, instead of inserting 2 hyperedges, we insert $2 \lambda$ hyperedges, where the copies of the vertices are multiplied also by $\lambda$. While our algorithms are easy to be generalized to support this process and completely solve the limitation if such behavior is desired, this method costs in memory and design complexity and for most cases the solution of generating an initial $\altmathcal{G}(n,M)$ graph with the closest number of acceptable edges should be preferred.

\section{Degree Distribution on Computer Experiments}

\begin{figure}
    \centering
    \includegraphics[width=\textwidth]{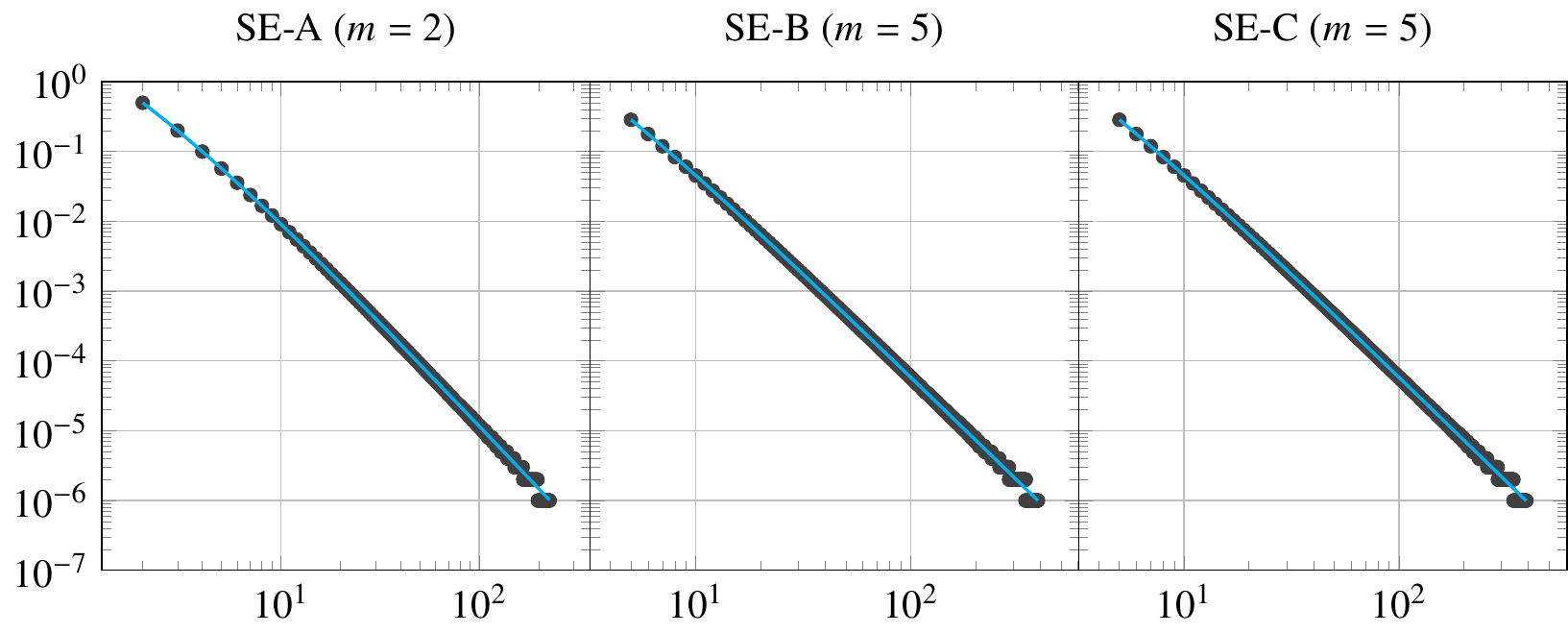}
    \caption{Degree distribution for the SE-A, SE-B and SE-C algorithms in a log-log plot for $n=300\,000$. For the SE-A algorithm it is $m=2$ while for SE-B and SE-C it is $m=5$. The cyan lines that are rendered on top of the marks show the theoretical degree distribution. The close association between the theoretical expectation and the experimental models can be observed.}
    \label{fig:degree-distribution}
\end{figure}

Figure~\ref{fig:degree-distribution} shows the experimental degree distributions of the three algorithms; from left to right SE-A, SE-B and SE-C. For the experimental approach of the degree distribution of our algorithms we use $n=300\,000$ in order to capture an approximation of the asymptotic state of the limiting distribution. By definition, the SE-A algorithm is only compatible with $m=2$. For the SE-B and SE-C algorithms we use $m=5$. In terms of the initial graph, the complete graph of $|V_0|=m$ is used. Regarding the options of the methods, the version of the SE-C algorithm refers to the algorithm that shuffles $m$ hyperedges (instead of more), while the split of $u$ in the two new hyperedges is performed using the random method with equal split.

The plots were generated using 10\,000 iterations of the generation process to achieve statistical stability. The plots also contain the theoretical degree distribution (Equation~\ref{eq:degree-distribution}) as a cyan line rendered on top of the data points. We note that, despite the theoretical distribution being a discrete probability distribution, it is rendered here as continuous in order to be visually distinguishable from the data points. The simulation shows that the resulting graphs are scale-free and an almost exact fit with the theoretical distribution. Although algorithms SE-A, SE-B and SE-C are different in their operation and their internal preferential attachment mechanism, they all result in scale-free distribution because they all satisfy the \textit{str$\pi$ps} property, as otherwise proven in~\cite{albert2002statistical}.

The reference implementation of our algorithms that was used for the plots is available online\footnote{https://github.com/gstamatelat/se}.

\section{Conclusions}

In this paper, we have utilized a series of random sampling schemes and methods to design a class of scale-free graph generator algorithms. Our models obey the dynamics of the preferential attachment scheme and the definition of the BA model. In particular, the algorithms are designed such that the inclusion probability of any vertex and at any time of the process is exactly proportional to its degree. This behavior is in contrast to existing methods where the inclusion probabilities are only approximately proportional to their degree. Our algorithms make use of precise and diverse random sampling steps and run in linear time with respect to the desired graph size $n$ for any fixed $m$. This is, to our knowledge, the first time that both strict probability interpretation and linear complexity are achieved for the generation of scale-free graphs via the preferential attachment mechanism.

Our analysis started with Algorithm SE-A, the simple case for $m=2$ that enriches existing models in the literature with clustering analysis and joint inclusion options. The correctness of this algorithm was shown by the fact that a vertex exists in the edge set as many times as its degree. The generalization of SE-A, algorithm SE, is proposed, that uses an auxiliary data structure $H$ that resembles a $m$-uniform hypergraph and works for $m \ge 2$. While the operation of this algorithm is abstract, the necessary invariants are defined that guarantee its correctness. Finally, algorithms SE-B and SE-C implement algorithm SE more concretely by either exchanging the necessary values in $H$ to ensure the invariants are satisfied or completely shuffling parts of $H$ respectively. Algorithm SE-B* is a variation of SE-B that additionally guarantees that every $m$-combination of nodes has a non-zero probability of being jointly selected at any step of the generation process. Overall, our analysis raises interesting future work directions about the higher-order properties surrounding the clustering features of the resulting scale-free graphs.

\bibliographystyle{unsrt}
\bibliography{main}

\end{document}